\documentclass[a4paper,11pt]{article}

\usepackage{fullpage}

\usepackage{amsmath,amssymb}
\usepackage{graphicx}
\usepackage{color}
\usepackage{algorithm}
\usepackage{algorithmic}
\usepackage{url}
\usepackage{wasysym}
\usepackage{subcaption}
\usepackage{hyperref}
\usepackage{authblk}
\usepackage{mathtools}
\usepackage[shortlabels]{enumitem}
\usepackage{wrapfig}

\graphicspath{{./Figures/}}

\newtheorem{theorem}{Theorem}

\newtheorem{corollary}{Corollary}
\newtheorem{lemma}{Lemma}
\newtheorem{remark*}{Remark}

\newcommand{\qedclaim}{\hfill $\diamond$ \medskip}

\newenvironment{proof}{\par \noindent {\bf Proof}.\ }{\hfill$\blacksquare$
\par \vspace{11pt}}

\pdfoutput=1 

\graphicspath{{./Figures/}}

\renewcommand{\Pr}{\mathbb{P}}
\newcommand{\E}{{\mathbb{E}}}
\newcommand{\V}{{\mathbb{V}ar}}

\usepackage{thm-restate}

\newtheorem{repeat_lemma_3}{Lemma 2}
\newtheorem{repeat_lemma_4}{Lemma 3}
\newtheorem{repeat_lemma_5}{Lemma 4}
\newtheorem{repeat_theorem_6}{Theorem 2}
\newtheorem{repeat_lemma_8}{Lemma 7}
\newtheorem{repeat_lemma_11}{Lemma 8}
\newtheorem{repeat_lemma_12}{Lemma 9}
\newtheorem{repeat_lemma_13}{Lemma 10}

\author[1]{Fr{\'e}d{\'e}ric Giroire}
\author[1]{Nicolas Nisse}
\author[1]{Kostiantyn Ohulchanskyi}
\author[1,2]{Ma{\l}gorzata Sulkowska}
\author[1]{Thibaud Trolliet}
\affil[1]{Universit{\'e} C{\^o}te d’Azur, CNRS, Inria, I3S, France}
\affil[2]{Wroc{\l}aw University of Science and Technology, Department~of~Fundamentals~of~Computer Science, Poland}


\date{}

\title{Preferential attachment hypergraph with vertex deactivation}

  
\begin{document}
\maketitle 

\begin{abstract}
In the field of complex networks, hypergraph models have so far received significantly less attention than graphs. However, many real-life networks feature multiary relations (co-authorship, protein reactions) may therefore be modeled way better by hypergraphs. Also, a recent study by Broido and Clauset suggests that a power-law degree distribution is not as ubiquitous in the natural systems as it was thought so far. They experimentally confirm that a majority of networks (56\% of around 1000 
networks that undergone the test) favor a power-law with an exponential cutoff over other distributions. We address the two above observations by introducing a preferential attachment hypergraph model which allows for vertex deactivations. The phenomenon of vertex deactivations is rare in existing theoretical models and omnipresent in real-life scenarios (social network accounts which are not maintained forever, collaboration networks in which people retire, technological networks in which devices break down). We prove that the degree distribution of the proposed model follows a power-law with an exponential cutoff. We also check experimentally that a Scopus collaboration network has the same characteristic. We believe that our model will predict well the behavior of systems from a variety of domains.
\end{abstract}

\section{Introduction}

The notion of \emph{complex networks} relates to the mathematical structures modeling large real-life systems. 
Their omnipresence across different life domains is remarkable. Complex networks model biological networks (e.g., protein or gene interactions schemes, maps of neural connections in the brain), social networks (Facebook, Twitter, Snapchat, collaboration networks), technological networks (power grids, transportation networks), the World Wide Web, etc. 
They allow to predict the behavior of the systems, serve as the benchmarks for testing algorithms that are used later in the real networks, and, in general, allow to understand better the underlying mechanisms that create those systems in nature. Roughly, since 1999, one observes a dynamical growth in experimental and theoretical research on complex networks in computer science, mathematical, and physical societies. It was the year when Barab{\'a}si and Albert introduced the seminal model of a \emph{preferential attachment} random graph~\cite{BA_basic}. This model is based on two mechanisms: growth (the graph is growing over time, gaining a new vertex and a bunch of edges at each time step) and preferential attachment (an arriving vertex is more likely to attach to other vertices with high degrees rather than with low degrees). It captures the \emph{small world} (small diameter) and the \emph{rich get richer} (leading to a heavy tailed degree distribution) phenomena commonly observed in nature. 

Since then, a number of theoretical models were presented, e.g., \cite{Watts_Strogatz,Molloy_Reed,Chung_Lu_model,Cooper_Frieze,Buck_Ost}. These were mostly graph models concentrated on reflecting three phenomena: a small diameter, a high clustering coefficient, and a \emph{power-law} degree distribution. It was thought for a long time that a power-law degree distribution is the most commonly present in nature~\cite{Bol_Rio_chapter}. However, this statement was recently questioned by Broido and Clauset \cite{BrCl2019}. They performed statistical tests on almost 1000 social, biological, technological, transportation, and information networks and observed that a ``majority of networks (56\%) favor the power-law with cutoff model over other distributions''. The cutoff observed in the tail of a distribution may be caused by a finite-size character of the dynamic network, i.e., when the elements deactivate after some time \cite{BrCl2019}. 
The phenomenon of vertex deactivations is rare in known theoretical models and omnipresent in real-life scenarios as the extinction events are fundamental in the world surrounding us. Think of social networks (Facebook, Twitter, Instagram, etc.) where users unsubscribe or simply stop using them, collaboration networks in which deactivated nodes represent people who retired, died or stopped working in the given domain, technological networks where a vertex deactivation is interpreted as a breakdown of the device or the web network in which web pages are not maintained forever.
Even though some theoretical models featuring deletions or deactivations of vertices were introduced \cite{CoFrVe2004,MoGhNe2006}, just a few of them lead to a degree distribution following a power-law with an exponential cutoff. One of the widely cited is a balls and bins scheme 
introduced by Fenner et al.~\cite{FeLeLo2005,FeLeLo2007}. 

In the model from \cite{FeLeLo2005,FeLeLo2007}, information about the degree of each element of the network is kept but information about who is connected with whom is lost. Working with graphs instead of bins and balls allows to keep this information. Nevertheless, graphs have another clear limitation. They reflect only binary relations while in practice we encounter many higher order relations (groups of interest, protein reactions, co-authorship, interactions between biological cells, GitHub users committing to the same repository). Nowadays they are often modeled in graphs by cliques which may lead to a profound information loss \cite{beyond_pairwise_survey_2020}. E.g., if there are three researchers in a triangle in a collaboration graph, one cannot tell whether they published one paper together or three independent papers, each per pair of researchers. Higher order relations can be captured by hypergraphs, that is, a generalization of graphs in which each (hyper)edge possibly links together more than two nodes. Sometimes keeping information about hyperedge may have a profound impact on analyzing the model and drawing conclusions. Consider any example in which a big hyperedge strongly indicates belonging to the same community (e.g., an email sent to a group of people should evidence the existence of a community rather than be treated as a set of bilateral emails). So far hypergraph models have received significantly less attention than graphs in the area of complex networks. 
Wang et al. introduced a preferential attachment hypergraph model 
but restricted to a specific subfamily of uniform acyclic hypergraphs (the analogue of trees within graphs) \cite{Wang_hyper}. The first rigorously studied non-uniform hypergraph preferential attachment model was proposed only in 2019 by Avin et al. \cite{ALP_hyper} and featured a power-law degree distribution. Another dynamic hypergraph model with a clear community structure was presented in \cite{hyper_mod}. Note that there exists an analogy between the hypergraphs and the random intersection graphs \cite{rand_inters_models,rand_inters_properties}. The algorithms and software tools for working with hypergraph networks, even the definitions of some features and measures started appearing only recently \cite{hyper_jl,hypernetx,KaPoPrSzTh2019,KaPrTh2020}.

\textbf{Results.} We propose a preferential attachment hypergraph model in which vertices may become inactive after some time.  
The hyperedges model multiary and not necessarily uniform relations, we allow for different cardinalities of hyperedges (e.g., articles may have different numbers of co-authors). 
We prove that the degree distribution of our model follows a power-law with an exponential cutoff and compare it with a real-life example, a Scopus research collaboration network. We believe that our model will be the next step towards developing the hypergraph chapter in the complex networks area and that will serve as a useful tool predicting well the behavior of the systems from a variety of domains.

\textbf{Paper organization.} Section \ref{sec:notation} contains basic definitions and notation. In Section~\ref{sec:model}, we introduce the hypergraph model with vertex deactivation and prove that its degree distribution follows a power-law with an exponential cutoff using a master equation approach. Due to the presence of hyperedges and to the possibility of vertex deactivation, we had to modify the classical approach (e.g. treat active and inactive vertices separately) and take advantage of some tools that were not used in this context before (e.g. the Stolz-Cesaro Theorem). In Section \ref{sec:theta}, we estimate one of the parameters that appears in the formula for the degree distribution of our model. It is defined as a limit, existence of which we assume (Assumption~(4) formulated in Section~3.2) to prove the main result (Theorem \ref{thm:deg_dist}). 
Such an assumption was already present in the literature on models with degree distribution following a power-law with an exponential cutoff, \cite{FeLeLo2005,FeLeLo2007}. Even though we also did not manage to prove the existence of this limit directly, the technical novelty is that we give a formally rigid indication on how to estimate its value (using Gaussian hypergeometric functions and the Banach Fixed Point Theorem) and an experimental justification for its existence in Section~\ref{sec:experiments}. Section~\ref{sec:experiments} also includes the experimental results on real data and the simulations of the model. Further works are discussed in Section~\ref{sec:conclude}. 


\section{Basic definitions and notation} \label{sec:notation}

We define a \emph{hypergraph} $H$ as a pair $H=(V,E)$, where $V$ is a set of vertices and $E$ is a multiset of hyperedges, i.e., non-empty, unordered multisets of $V$. We allow for a multiple appearance of a vertex in a hyperedge (self-loops) as well as a multiple appearance of a hyperedge in $E$. The degree of a vertex $v$ in a hyperedge $e$, denoted by $d(v,e)$, is the number of times $v$ appears in $e$. The cardinality of a hyperedge $e$ is  $|e| = \sum_{v \in e} d(v,e)$. The degree of a vertex $v \in V$ in $H$ is understood as
the number of times it appears in all hyperedges, i.e., $\deg(v) = \sum_{e \in E} d(v,e)$. If $|e|=k$ for all $e \in E$, $H$ is said to be {\it $k$-uniform}.

We consider hypergraphs that grow by adding vertices and/or hyperedges at discrete time steps $t=0,1,2,\ldots$ according to some rules involving randomness. The random hypergraph obtained at time $t$ will be denoted by $H_t=(V_t,E_t)$ and the degree of $u \in V_t$ in $H_t$ by $\deg_t(u)$. During this building process some of the  vertices may become deactivated.  
Therefore the set $V_t$ splits into $\mathcal{A}_t$, the set of vertices active at time $t$ (denote its cardinality by $A_t$), and $\mathcal{I}_t$, the set of vertices that are not active at time $t$ (denote its cardinality by $I_t$); thus $|V_t| = A_t + I_t$. By $D_t$ we denote the sum of degrees of vertices active at time~$t$, i.e., $D_t = \sum_{u \in \mathcal{A}_t} \deg_t(u)$. Moreover, we write $\Theta_t$ for the degree of a vertex chosen for deactivation at time $t$ (the description of a deactivation procedure is given within the formal definition of the model in the next section).

$N_{k,t}$ stands for the number of vertices in $H_t$ of degree $k$. Thus $\sum_{k \geq 1} N_{k,t} = |V_t|$. Similarly, $A_{k,t}$ is the number of active vertices of degree $k$ at time $t$ and $I_{k,t}$ the number of inactive vertices of degree $k$ at time $t$ (denote the corresponding sets by $\mathcal{A}_{k,t}$ and $\mathcal{I}_{k,t}$, respectively); $\sum_{k \geq 1} A_{k,t} = A_t$, $\sum_{k \geq 1} I_{k,t} = I_t$ and $N_{k,t} = A_{k,t} + I_{k,t}$. We write $f(k) \sim g(k)$ if $f(k)/g(k) \xrightarrow{k \rightarrow \infty} 1$. 
We say that the degree distribution of a random hypergraph follows a \emph{power-law} if the expected fraction of vertices of degree $k$ is proportional to $k^{-\beta}$ for some exponent $\beta > 1$. Formally, we interpret it as $\lim_{t \rightarrow \infty} \E\left[\frac{N_{k,t}}{|V_t|}\right] \sim c \cdot k^{-\beta}$ for some positive constants $c$ and $\beta > 1$. Similarly, we say that the degree distribution of $H_t$ follows a \emph{power-law with an exponential cutoff} if $\lim_{t \rightarrow \infty} \E\left[\frac{N_{k,t}}{|V_t|}\right] \sim c \cdot k^{-\beta} \gamma^k$, where $\gamma \in (0,1)$.

We say that an event $A$ occurs \emph{with high probability} (whp) if the probability $\Pr[A]$ depends on a certain number $t$ and tends to $1$ as $t$ tends to infinity.

\section{Preferential attachment hypergraph with vertex deactivation} \label{sec:model}

The model introduced in this section may be seen as a generalization of a hypergraph model presented by Avin et al. in \cite{ALP_hyper}. The model from \cite{ALP_hyper} allows for two different actions at a single time step - attaching a new vertex by a hyperedge to the existing structure or creating a new hyperedge on already existing vertices. We add another possibility - deactivation of a vertex. Once a vertex is chosen for deactivation, it stays deactivated forever, i.e., it remains in the hypergraph but it can not be chosen to the new hyperedges - its degree freezes and the hyperedges incident with it remain in the hypergraph. Avin et al. proved that the degree distribution of their model follows a power-law. We prove that adding the possibility of deactivation of vertices generates an exponential cutoff in the degree distribution.

\subsection{Model $\mathbf{H(H_0,p_v,p_e,Y)}$}
The hypergraph model $H$ is characterized by the following  parameters:  
\begin{enumerate}
	\item $H_0$ - the initial hypergraph, seen at $t=0$;
	\item $p_v, p_e, p_d = 1-p_e-p_v$ - the probabilities indicating, what are the chances that a particular type of event occurs at a single time step;
	\item $Y = (Y_0, Y_1, \ldots, Y_t, \ldots)$ - independent random variables, giving the cardinalities of the hyperedges that are added at a single time step.
\end{enumerate}
Here is how the structure of $H = H(H_0,p_v,p_e,Y)$ is being built. We start with some non-empty hypergraph $H_0$ at $t=0$. We assume for simplicity that $H_0$ consists of a hyperedge of cardinality $1$ over a single vertex. Nevertheless, all the proofs may be generalized to any initial $H_0$ having constant number of vertices and constant number of hyperedges with constant cardinalities. `Vertices chosen from $\mathcal{A}_t$ in proportion to degrees' means that active vertices are chosen independently (possibly with repetitions) and the probability that any $u$ from $\mathcal{A}_t$ is chosen is
\[
\Pr[u \textnormal{ is chosen}] = \frac{\deg_t(u)}{\sum_{v \in \mathcal{A}_t}\deg_t(v)} = \frac{\deg_t(u)}{D_t}
\]
($\deg_t(u)$ and $\deg_t(v)$ refer to the degrees of $u$ and $v$ in the whole $H_t$). For $t \geqslant 0$ we form $H_{t+1}$ from $H_t$ choosing only one of the following events according to $p_v, p_e, p_d$.

\begin{itemize}
	\item With probability $p_v$: Add one vertex $v$. Draw a value $y$ being a realization of $Y_t$. Then select $y-1$ vertices from $\mathcal{A}_t$ in proportion to degrees; add a new hyperedge consisting of $v$ and the $y-1$ selected vertices.
	\item With probability $p_e$: Draw a value $y$ being a realization of $Y_t$. Then select $y$ vertices from $\mathcal{A}_t$ in proportion to degrees; add a new hyperedge consisting of the $y$ selected vertices.
	\item With probability $p_d$: Choose one vertex from $\mathcal{A}_t$ in proportion to degrees. Deactivate it, i.e., $\mathcal{A}_{t+1} = \mathcal{A}_t \setminus  \{v\}$ and $\mathcal{I}_{t+1} = \mathcal{I}_t \cup  \{v\}$.
\end{itemize}

\begin{remark*}
	Note that this model can be simplified to many known models by choosing the appropriate set of parameters:
	\begin{enumerate}[1)]
		\item setting $p_v=1$, $p_e=p_d=0$ and $Y_t = 2$ (all the hyperedges are of size $2$ thus one simply builds a graph) one gets the Barab{\'a}si-Albert tree \cite{BA_basic};
		\item setting $p_d=0$ and $Y_t=2$ one gets the preferential attachment scheme for graphs with vertex- and edge-step~\cite{ChLu_book}, Chapter~3;
		\item setting $p_d=0$ one gets the hypergraph model presented by Avin et al. in~\cite{ALP_hyper}.
	\end{enumerate}
\end{remark*}


\begin{remark*}
	As the hypergraph gets large, the probability of creating a self-loop can be well bounded and is quite small provided that the sizes of hyperedges are reasonably bounded.
\end{remark*}

Note that if we want a process to continue then it is reasonable to demand that, on average, we add more vertices to the system than we deactivate. Therefore we always assume $p_v > p_d$. Then the probability that the process 
will not terminate (i.e., that we never arrive at the moment in which all vertices are deactivated) is positive and equals $1-(p_d/p_v)^i$, where $i$ is the number of active vertices at time $t=0$, in our case $i=1$ (compare with the probability that the gambler's fortune will increase forever, \cite{Epstein_book}). 
We concentrate only on the case when the process does not terminate. 

\subsection{Degree distribution of $\mathbf{H(H_0,p_v,p_e,Y)}$}

In this section we prove that the degree distribution of $H = H(H_0,p_v,p_e,Y)$ follows a power-law with an exponential cutoff under four assumptions.

First two of them address the distributions of the cardinalities of hyperedges ($Y_t$) added step by step. We assume that their expectation is constant and their variance sublinear in $t$, which, we feel, is in accordance with many real-life systems (in particular, with the scientific collaboration network we are working with experimentally in Section \ref{sec:experiments}).  

The third assumption tells that we will restrict ourselves to only such distributions of $Y_t$ for which the distribution of $D_t$ (the sum of degrees of active vertices at time $t$) remains concentrated. Similar assumption one finds in other papers on complex network models, e.g. in  \cite{ALP_hyper} by Avin et al. (presenting a model of a preferential attachment hypergraph with the degree distribution following a power-law) or in \cite{Krapivsky_2001,Krapivsky_2000} by Krapivsky et al. (where the models in which the arriving vertex attaches to the existing node $w$ with probability proportional to $(\deg{w})^r$ with $r<1$ is studied).

The fourth assumption refers to the average sum of degrees of vertices chosen for deactivation ($\sum_{\tau=1}^{t} \E[\Theta_{\tau}]$). In Section \ref{sec:theta} we prove that its order is $\Theta(t)$. However, we additionally assume that  the limit $\lim_{t \rightarrow \infty} \frac{1}{t} \sum_{\tau=1}^{t} \E[\Theta_{\tau}]$ exists and equals some $\theta \in \mathbb{R}_{>0}$. Such assumption was also already present in the literature on models with degree distribution following a power-law with an exponential cutoff \cite{FeLeLo2005,FeLeLo2007}. Since we were not able to (just as the authors of \cite{FeLeLo2005} or \cite{FeLeLo2007}) theoretically justify the existence of the stated limit we support it by simulations in Section \ref{sec:experiments}. We also explain in Section \ref{sec:theta} how the limiting value may be obtained, assuming that the limit exists.


\noindent
\textbf{Assumptions}
\begin{enumerate} 
	\item $\E[Y_t] = \mu \in \mathbb{R}_{>0}$ for all $t>0$.
	\item $\V[Y_t] = o(t)$.
	\item $\Pr[D_t \neq \E[D_t] + o(t)] = o(1/t)$.
	\item $\lim_{t \rightarrow \infty} \frac{1}{t} \sum_{\tau=1}^{t} \E[\Theta_{\tau}] = \theta \in \mathbb{R}_{>0}$.
\end{enumerate}


Before we formally state and prove the main theorem we introduce several technical lemmas and theorems that will be helpful later on.

\begin{theorem}[Stolz-Ces{\`a}ro theorem] \label{thm:S-C}
	Let $(a_t)_{t \geq 1}$ and $(b_t)_{t \geq 1}$ be the sequences of real numbers. Assume that $(b_t)_{t \geq 1}$ is strictly monotone and divergent. If  $\lim_{t \rightarrow \infty} \frac{a_{t+1}-a_t}{b_{t+1}-b_t} = g$ then $\lim_{t \rightarrow \infty} \frac{a_t}{b_t} = g$.
\end{theorem}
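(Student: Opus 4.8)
The plan is to prove this discrete analogue of L'Hôpital's rule by a telescoping-summation argument. First I would reduce to a canonical case: since $(b_t)_{t\geq 1}$ is strictly monotone and divergent, it either strictly increases to $+\infty$ or strictly decreases to $-\infty$; replacing $(a_t,b_t)$ by $(-a_t,-b_t)$ changes neither the hypothesis ratio nor the conclusion ratio, so without loss of generality I may assume that $(b_t)$ is strictly increasing with $b_t \to +\infty$. In particular $b_{t+1}-b_t>0$ for every $t$, and $b_t>0$ for all large $t$, which is what legitimizes the divisions performed below. I would treat the finite case $g \in \mathbb{R}$, which is the one actually needed in the sequel; the cases $g=\pm\infty$ follow from a trivial variant of the same estimate.

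Next I would set up the core estimate. Fix $\epsilon>0$. By the hypothesis $\lim_{t\to\infty}\frac{a_{t+1}-a_t}{b_{t+1}-b_t}=g$, there is an index $N$ such that for every $t \geq N$ one has $g-\epsilon < \frac{a_{t+1}-a_t}{b_{t+1}-b_t} < g+\epsilon$. Multiplying through by the positive quantity $b_{t+1}-b_t$ gives $(g-\epsilon)(b_{t+1}-b_t) < a_{t+1}-a_t < (g+\epsilon)(b_{t+1}-b_t)$ for all $t \geq N$. Summing this chain of inequalities over $t=N,N+1,\dots,m-1$ and telescoping both sides yields, for every $m>N$,
$$(g-\epsilon)(b_m-b_N) < a_m-a_N < (g+\epsilon)(b_m-b_N).$$

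Finally I would divide by $b_m>0$ (valid for $m$ large) and pass to the limit. Writing $\frac{a_m}{b_m}=\frac{a_N}{b_m}+\frac{a_m-a_N}{b_m}$, the estimate above bounds the last term between $(g-\epsilon)\frac{b_m-b_N}{b_m}$ and $(g+\epsilon)\frac{b_m-b_N}{b_m}$. Since $a_N$ and $b_N$ are fixed while $b_m\to+\infty$, we have $\frac{a_N}{b_m}\to 0$ and $\frac{b_m-b_N}{b_m}\to 1$. Taking $\liminf$ and $\limsup$ as $m\to\infty$ therefore sandwiches the ratio: $g-\epsilon \leq \liminf_{m}\frac{a_m}{b_m} \leq \limsup_{m}\frac{a_m}{b_m}\leq g+\epsilon$. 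As $\epsilon>0$ was arbitrary, the $\liminf$ and $\limsup$ coincide and equal $g$, proving $\lim_{t\to\infty}\frac{a_t}{b_t}=g$. There is no genuine obstacle here, the result being classical; the only points demanding care are the sign bookkeeping in the monotonicity reduction and the justification that the boundary term $\frac{a_N}{b_m}$ is negligible, which is precisely where the divergence of $(b_t)$ is used.
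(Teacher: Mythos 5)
Your proof is correct. The paper states this classical theorem without proof (citing it only as a known tool), and your argument is the standard one: reduce by negation to $b_t \nearrow +\infty$, sandwich the increments $a_{t+1}-a_t$ between $(g\pm\epsilon)(b_{t+1}-b_t)$, telescope, divide by $b_m$, and use $a_N/b_m \to 0$ and $(b_m-b_N)/b_m \to 1$ to trap $\liminf$ and $\limsup$ within $[g-\epsilon,\,g+\epsilon]$. All steps, including the sign bookkeeping in the monotonicity reduction and the disposal of the boundary term via divergence of $(b_t)$, are handled properly, and the finite-$g$ case you treat is the only one the paper actually uses.
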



\begin{lemma}[\cite{ChLu_book}, Chapter 3.3] \label{lemma:rec_seq}
	Let $(a_t)_{t \geq 1}$, $(b_t)_{t \geq 1}$ and $(c_t)_{t \geq 1}$ be the sequences of real numbers, where $b_t \xrightarrow{t \rightarrow \infty} b>0$, $c_t \xrightarrow{t \rightarrow \infty} c$ and $a_t$ satisfies the recursive relation
	$a_{t+1}~=~\left(1-\frac{b_t}{t}\right)a_t + c_t$.
	Then $\lim_{t \rightarrow \infty} \frac{a_t}{t} = \frac{c}{1+b}$.
\end{lemma}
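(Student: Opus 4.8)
The plan is to turn the given affine recurrence into a telescoping one by dividing through by the appropriate discrete integrating factor, and then to read off the limit of $a_t/t$ via the Stolz-Ces\`aro theorem (Theorem~\ref{thm:S-C}). Concretely, fix $t_0$ large enough that $0<b_s/s<1$ for all $s\ge t_0$ (possible since $b_s\to b>0$ forces $b_s$ to be bounded and eventually positive), and set
\[
p_{t_0}=1,\qquad p_t=\prod_{s=t_0}^{t-1}\left(1-\frac{b_s}{s}\right)\quad(t>t_0).
\]
Each factor lies in $(0,1)$, so $p_t\in(0,1)$ and $p_{t+1}=(1-b_t/t)\,p_t$.

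Dividing the recurrence $a_{t+1}=(1-b_t/t)a_t+c_t$ by $p_{t+1}$ and using $p_{t+1}=(1-b_t/t)p_t$ gives the key telescoping identity
\[
\frac{a_{t+1}}{p_{t+1}}-\frac{a_t}{p_t}=\frac{c_t}{p_{t+1}}.
\]
The idea is then to write $a_t/t=A_t/B_t$ with $A_t:=a_t/p_t$ and $B_t:=t/p_t$, so that the numerator increment is exactly $A_{t+1}-A_t=c_t/p_{t+1}$, while a one-line computation (substituting $t/p_t=(t-b_t)/p_{t+1}$) yields $B_{t+1}-B_t=\big((t+1)-(t-b_t)\big)/p_{t+1}=(1+b_t)/p_{t+1}$. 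Hence the Stolz-Ces\`aro quotient collapses to
\[
\frac{A_{t+1}-A_t}{B_{t+1}-B_t}=\frac{c_t/p_{t+1}}{(1+b_t)/p_{t+1}}=\frac{c_t}{1+b_t}\xrightarrow{t\to\infty}\frac{c}{1+b},
\]
using $c_t\to c$ and $b_t\to b$.

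To invoke Theorem~\ref{thm:S-C} it remains to check that $(B_t)$ is strictly monotone and divergent. Strict monotonicity is immediate from $B_{t+1}-B_t=(1+b_t)/p_{t+1}>0$ for $t\ge t_0$ (as $b_t$ is eventually positive and $p_{t+1}>0$). Divergence is equally cheap: because $0<p_t\le 1$ we have $B_t=t/p_t\ge t\to\infty$, so no fine asymptotics of $p_t$ are needed. Stolz-Ces\`aro then gives $\lim_{t\to\infty}a_t/t=\lim_{t\to\infty}A_t/B_t=c/(1+b)$; replacing the starting index $1$ by $t_0$ is harmless, since finitely many initial terms do not affect the limit.

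The only genuinely delicate point is the self-referential nature of the recurrence: a naive application of Stolz-Ces\`aro to $a_t/t$ fails because $a_{t+1}-a_t=c_t-(b_t/t)a_t$ still contains the unknown $a_t/t$. The integrating factor $p_t$ is precisely what removes this circularity, turning the increment ratio into an explicit convergent expression. I therefore expect essentially all of the (light) bookkeeping to be in justifying the choice of $t_0$ together with the positivity and monotonicity claims; everything else is the single telescoping step above.
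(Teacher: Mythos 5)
Your proof is correct: the integrating factor $p_t=\prod_{s=t_0}^{t-1}(1-b_s/s)$ is well defined and positive for the $t_0$ you choose, the telescoping identities $A_{t+1}-A_t=c_t/p_{t+1}$ and $B_{t+1}-B_t=(1+b_t)/p_{t+1}$ are both verified by the one-line substitutions you give, and the hypotheses of Theorem~\ref{thm:S-C} (strict monotonicity from $b_t>0$ for $t\ge t_0$, divergence from $B_t=t/p_t\ge t$) are checked cleanly, with the restriction to $t\ge t_0$ and the arbitrariness of $a_{t_0}$ correctly dismissed as irrelevant to the limit. Note, however, that the paper itself does not prove this lemma; it imports it from \cite{ChLu_book}, Chapter~3.3, where the argument is of a different nature: Chung and Lu work directly with the deviation of $a_t$ from $\frac{c}{1+b}\,t$, propagating the error through the recurrence and bounding the accumulated products of factors $(1-b_s/s)$ to show the influence of initial conditions and of the fluctuations $b_t-b$, $c_t-c$ decays (roughly like $t^{-b}$). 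Your route is genuinely different and, in the context of this paper, arguably more economical: by dividing out the integrating factor you reduce everything to a single application of Stolz--Ces\`aro, which the authors already state as Theorem~\ref{thm:S-C} and use elsewhere (e.g., for $\bar{I}_k$ and in Theorem~\ref{thm:R_fixed}), and you need no asymptotics for $p_t$ whatsoever --- only $0<p_t\le 1$ --- whereas the direct estimate buys explicit rate information that your limit-only argument deliberately forgoes. The one point worth making explicit if this were written out in full: convergent sequences are bounded, so $b_s/s\to 0$, which is exactly why your $t_0$ exists; you say this, and nothing else in the argument is delicate.
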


The proofs of Lemmas \ref{lemma:Nkt_limit}, \ref{lemma:Y2/D2}, and \ref{lemma:A/D} can be found in the Appendix A.

\begin{lemma} 
	\label{lemma:Nkt_limit}
	If $\lim_{t \rightarrow \infty} \frac{\E[N_{k,t}]}{t} \sim c \cdot k^{-\beta} \gamma^k \left(\frac{1}{k} + \delta \right)$ for some positive constants $c, \beta, \gamma, \delta$ then $\lim_{t \rightarrow \infty} \E\left[\frac{N_{k,t}}{|V_t|}\right] \sim \frac{c}{p_v} k^{-\beta} \gamma^k \left(\frac{1}{k} + \delta \right)$.
	\textnormal{(}Here ``$\sim$'' refers to the limit by $k \rightarrow \infty$.\textnormal{)}
\end{lemma}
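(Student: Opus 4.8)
The plan is to reduce the statement to a single fact about each fixed $k$ and then read off the $k\to\infty$ asymptotics for free. Write $L(k) := \lim_{t\to\infty} \E[N_{k,t}]/t$; the hypothesis asserts that this limit exists for each $k$ and satisfies $L(k) \sim c\,k^{-\beta}\gamma^k(1/k+\delta)$. I claim it suffices to prove the per-$k$ identity $\lim_{t\to\infty}\E[N_{k,t}/|V_t|] = L(k)/p_v$. Indeed, once this is established, multiplying the assumed asymptotic by the constant $1/p_v$ immediately yields $\lim_{t\to\infty}\E[N_{k,t}/|V_t|] \sim \frac{c}{p_v}\,k^{-\beta}\gamma^k(1/k+\delta)$, which is exactly the claim. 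So all the work is in the per-$k$ identity, where $k$ is frozen and $t\to\infty$.

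The first ingredient is the concentration of $|V_t|$. A new vertex is created only in a $p_v$-event, and exactly one per such step, while the event type at each step is chosen independently; hence the number of vertices added up to time $t$ is distributed as $\mathrm{Bin}(t,p_v)$ and $|V_t| = |V_0| + \mathrm{Bin}(t,p_v)$ with $|V_0|$ constant. In particular $\E[|V_t|] = |V_0| + p_v t$, and a Chernoff bound gives, for any fixed $\epsilon \in (0,1)$, $\Pr[G_t^c] \le e^{-\Omega(t)}$, where $G_t := \{(1-\epsilon)p_v t \le |V_t| \le (1+\epsilon)p_v t\}$. I will also use the deterministic bounds $N_{k,t} \le |V_t| \le |V_0| + t$, valid since at most one vertex is added per step.

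The core step is a sandwich estimate that trades the expectation of a ratio for a ratio of expectations. Since $0 \le N_{k,t}/|V_t| \le 1$, splitting the expectation on $G_t$ and using $|V_t| \ge (1-\epsilon)p_v t$ there gives the upper bound $\E[N_{k,t}/|V_t|] \le \frac{\E[N_{k,t}]}{(1-\epsilon)p_v t} + \Pr[G_t^c]$, whose limit in $t$ is $L(k)/\big((1-\epsilon)p_v\big)$. For the lower bound, restricting to $G_t$ and using $|V_t| \le (1+\epsilon)p_v t$ gives $\E[N_{k,t}/|V_t|] \ge \frac{\E[N_{k,t}] - \E[N_{k,t}\mathbf{1}_{G_t^c}]}{(1+\epsilon)p_v t}$; here the correction satisfies $\E[N_{k,t}\mathbf{1}_{G_t^c}] \le (|V_0|+t)\,\Pr[G_t^c] = o(1)$ by the deterministic bound together with the exponential tail, so after dividing by $p_v t$ it is negligible and the limit in $t$ is $L(k)/\big((1+\epsilon)p_v\big)$. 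Letting $\epsilon \to 0$ squeezes both bounds to $L(k)/p_v$, which establishes the per-$k$ identity and hence the lemma.

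The main obstacle is precisely this last interchange: $\E[N_{k,t}/|V_t|]$ is the expectation of a ratio, not the ratio of expectations, so the identity $\lim_t \E[N_{k,t}/|V_t|] = L(k)/p_v$ is not automatic and is the only non-routine point. What makes it go through is the combination of the exponentially small tail of $|V_t|$ around $p_v t$ with the deterministic linear upper bound $|V_t| \le |V_0|+t$, which together force the contribution of the atypical event to vanish even after division by $t$. Everything else is bookkeeping once the existence of $L(k)$, granted by the hypothesis, is in hand.
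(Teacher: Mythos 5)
Your proof is correct and takes essentially the same route as the paper's: both exploit that $|V_t|$ is (up to the constant $|V_0|$) binomial with parameters $t$ and $p_v$, apply a Chernoff bound, split the expectation over the concentration event, and sandwich $\E\left[\frac{N_{k,t}}{|V_t|}\right]$ between ratios of expectations using $N_{k,t}\le |V_t|$, after which the $k\to\infty$ asymptotics follow by multiplying by $1/p_v$. The only cosmetic difference is that you work with a fixed relative window $(1\pm\epsilon)p_v t$ and let $\epsilon\to 0$ at the end (handling the bad-event correction explicitly via $N_{k,t}\le |V_0|+t$), whereas the paper uses a shrinking window of width $\sqrt{9p_v t\ln t}$ with polynomial tail $2/t^3$ to reach the limit directly.
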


\begin{lemma} \label{lemma:Y2/D2}
	Assume that $\E[Y_t] = \mu$ for all $t>0$ and $\V[Y_t] = o(t)$. Then
	\[
	\E\left[\frac{Y_t^2}{D_{t-1}^2}\right] = o\left(\frac{1}{t}\right).
	\]
\end{lemma}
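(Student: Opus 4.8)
The plan is to separate the two factors by exploiting independence. The cardinality variables $Y_0,Y_1,\dots$ are independent of one another and of the action choices and preferential selections, while $D_{t-1}$ is a function of $Y_0,\dots,Y_{t-2}$ and of the randomness used up to step $t-1$; hence $Y_t$ is independent of $D_{t-1}$. (Restricting to surviving trajectories does not spoil this, since survival is decided by the action choices alone and is therefore independent of $Y_t$.) Because $D_{t-1}\ge 1$ on survival, the factor $1/D_{t-1}^2$ is bounded and both expectations below are finite, so I would write
\[
\E\left[\frac{Y_t^2}{D_{t-1}^2}\right]=\E[Y_t^2]\cdot\E\left[\frac{1}{D_{t-1}^2}\right].
\]
The first factor is $\E[Y_t^2]=\V[Y_t]+\mu^2=o(t)$. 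The whole problem thus reduces to proving $\E[1/D_{t-1}^2]=O(1/t^2)$, since then the product is $o(t)\cdot O(1/t^2)=o(1/t)$.

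To bound $\E[1/D_{t-1}^2]$ I would combine a deterministic lower bound with a concentration estimate. Every vertex is born inside a hyperedge and degrees never decrease, so each active vertex has degree at least $1$ and therefore $D_{t-1}\ge A_{t-1}$, the number of active vertices. Writing $A_{t-1}=1+\sum_{\tau} X_\tau$ with i.i.d. increments $X_\tau$ taking values $+1,0,-1$ with probabilities $p_v,p_e,p_d$ and mean $p_v-p_d>0$, Hoeffding's inequality yields a constant $c>0$ with
\[
\Pr\!\left[A_{t-1}\le \tfrac{p_v-p_d}{2}\,(t-1)\right]\le e^{-c(t-1)};
\]
conditioning on survival (an event of positive, constant probability) inflates this by at most a constant factor.

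I would then split the expectation along the event $G=\{A_{t-1}> \tfrac{p_v-p_d}{2}(t-1)\}$. On $G$ one has $1/D_{t-1}^2\le 1/A_{t-1}^2 = O(1/t^2)$, while on $G^c$ one uses only the crude bound $1/D_{t-1}^2\le 1$, whose contribution is at most $\Pr[G^c]=O(e^{-c(t-1)})$ and hence negligible. This gives $\E[1/D_{t-1}^2]=O(1/t^2)$ and, combined with the first paragraph, the claim. Note that only Assumptions (1)--(2) are used; Assumptions (3)--(4) are not needed here.

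I expect the only real obstacle to be this bound on $\E[1/D_{t-1}^2]$: one cannot merely substitute $\E[D_{t-1}]=\Theta(t)$, since $x\mapsto 1/x^2$ is convex and $D_{t-1}$ could a priori be as small as $1$. The observation that makes the argument work is that the event on which $D_{t-1}$ fails to be linear in $t$ is exponentially unlikely, so even the trivial estimate $1/D_{t-1}^2\le 1$ there is more than enough.
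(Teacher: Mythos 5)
Your proof is correct and follows essentially the same route as the paper's: use the independence of $Y_t$ together with $\E[Y_t^2]=\V[Y_t]+\mu^2=o(t)$, reduce via $D_{t-1}\ge A_{t-1}$ to bounding $\E[1/A_{t-1}^2]=O(1/t^2)$, and establish that by a concentration bound on $A_{t-1}$ plus the trivial bound $1/D_{t-1}^2\le 1$ on the unlikely event. The only cosmetic differences are that you apply Hoeffding to the $\{+1,0,-1\}$ increment walk and handle the conditioning on survival explicitly, whereas the paper invokes a Chernoff bound treating $A_{t-1}$ as binomial with parameter $p_v-p_d$ (obtaining a $2/t^2$ tail rather than an exponential one) --- if anything, your treatment of these two points is slightly more careful.
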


\begin{lemma} \label{lemma:A/D}
	Assume that $D_t = \E[D_t] + o(t)$ whp. Then for each $k \geq 1$
	\[
	\E\left[\frac{A_{k,t}}{D_t}\right] = \frac{\E[A_{k,t}]}{\E[D_t]} + o(1).
	\] 
\end{lemma}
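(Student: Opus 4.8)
The plan is to combine a \emph{deterministic} bound on the ratio with the whp concentration of $D_t$, splitting the expectation according to whether $D_t$ is close to its mean. The key observation is that $A_{k,t}/D_t$ is bounded pointwise: every active vertex of degree $k$ contributes exactly $k$ to $D_t$, so $D_t \ge k\,A_{k,t}$ and hence $0 \le A_{k,t}/D_t \le 1/k$ (conditioning on non-termination gives $D_t \ge 1$, so the ratio is well defined). This boundedness is what lets us discard the rare event on which $D_t$ is far from its mean. I will also use that $\mu_t := \E[D_t] = \Theta(t)$ (which follows from the recurrence for $\E[D_t]$ together with Assumptions~1 and~4) and the deterministic bound $A_{k,t} \le |V_t| \le |V_0| + t = O(t)$.

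By the hypothesis there is $\phi(t) = o(t)$ with $\Pr[\bar G_t] \to 0$, where $G_t := \{\,|D_t - \mu_t| \le \phi(t)\,\}$. First I would isolate the bad event using the pointwise bound:
\[
\E\left[\frac{A_{k,t}}{D_t}\,\mathbf{1}_{\bar G_t}\right] \le \frac1k\,\Pr[\bar G_t] = o(1).
\]
On $G_t$ I would linearise: since $\mu_t = \Theta(t)$ and $\phi(t) = o(t)$, we have $|D_t-\mu_t|/\mu_t = o(1)$ on $G_t$, so $\mu_t/D_t = 1 + \eta_t$ with $|\eta_t| = o(1)$ uniformly on $G_t$. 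Hence
\[
\E\left[\frac{A_{k,t}}{D_t}\,\mathbf{1}_{G_t}\right] = \E\left[\frac{A_{k,t}}{\mu_t}\,\mathbf{1}_{G_t}\right] + \E\left[\frac{A_{k,t}}{\mu_t}\,\eta_t\,\mathbf{1}_{G_t}\right],
\]
and the last term is $o(1)$ because $|\eta_t|=o(1)$ on $G_t$ and $\E[A_{k,t}]/\mu_t = O(1)$. Finally, passing from the truncated to the full mean,
\[
\frac{\E[A_{k,t}]}{\mu_t} - \E\left[\frac{A_{k,t}}{\mu_t}\,\mathbf{1}_{G_t}\right] = \frac{\E[A_{k,t}\,\mathbf{1}_{\bar G_t}]}{\mu_t} \le \frac{(|V_0|+t)\,\Pr[\bar G_t]}{\mu_t} = o(1),
\]
again by $A_{k,t} \le |V_0|+t$ and $\mu_t = \Theta(t)$. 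Adding the three $o(1)$ terms yields $\E[A_{k,t}/D_t] = \E[A_{k,t}]/\E[D_t] + o(1)$, as claimed; note the estimates hold for each fixed $k$, with constants allowed to depend on $k$.

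The main obstacle is the rare event $\bar G_t$: there $D_t$ may be far from $\mu_t$, and a priori the ratio could be of order $1$ while $\Pr[\bar G_t]$ tends to $0$ only slowly, so the contribution need not vanish. The deterministic bounds $A_{k,t}/D_t \le 1/k$ and $A_{k,t} \le |V_t| = O(t)$ are exactly what neutralise this; the only other ingredient is the linear lower bound $\E[D_t] = \Theta(t)$, which guarantees that the $o(t)$ fluctuation is negligible against the mean. If one wished to avoid assuming $\E[D_t]=\Theta(t)$, it could be derived first from the one-step expected change $(p_v+p_e)\mu - p_d\E[\Theta_t]$ of $D_t$ together with Assumption~4.
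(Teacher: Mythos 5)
Your proof is correct and takes essentially the same route as the paper's: both split the expectation on the whp concentration event for $D_t$, use a deterministic pointwise bound on $A_{k,t}/D_t$ to kill the bad event, and use $\E[D_t]=\Omega(t)$ (which the paper gets more directly from $\E[D_t]\geq\E[A_t]=1+(p_v-p_d)t$ and $p_v>p_d$, with no need for Assumptions~1 and~4). If anything, your write-up is slightly more careful than the paper's one-line computation, since you make explicit the passage from the truncated mean $\E[A_{k,t}\mathbf{1}_{G_t}]/\E[D_t]$ back to $\E[A_{k,t}]/\E[D_t]$ via $A_{k,t}=O(t)$, a step the paper elides.
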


\begin{lemma} \label{fact:Dt/t}
	Let $\E[Y_t] = \mu$ for all $t>0$. 
	Assume that  $\lim_{t \rightarrow \infty} \frac{1}{t} \sum_{\tau=1}^{t} \E[\Theta_{\tau}] = \theta \in \mathbb{R}_{>0}$. Then $\lim_{t \rightarrow \infty} \frac{\E[D_t]}{t} = (p_v+p_e)\mu - p_d \theta$.
\end{lemma}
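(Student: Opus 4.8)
The plan is to derive a one-step recurrence for $\E[D_t]$, telescope it, and then invoke Assumption~(4).

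First I would compute, for a single step $t \to t+1$, the effect of each of the three possible events on $D_t = \sum_{u \in \mathcal{A}_t}\deg_t(u)$, conditionally on $H_t$. If the vertex-event occurs (probability $p_v$), a hyperedge of cardinality $Y_t$ is added whose every member (the new active vertex together with the $y-1$ selected active vertices) is active; since the total degree contributed by a hyperedge equals its cardinality, $D$ increases by exactly $Y_t$, regardless of repetitions among the chosen vertices. The edge-event (probability $p_e$) likewise places all $Y_t$ of the new hyperedge's degree onto active vertices, so again the increment is $Y_t$. Finally, the deactivation-event (probability $p_d$) moves one vertex, chosen in proportion to degree, from $\mathcal{A}_t$ to $\mathcal{I}_t$; its degree is unchanged but it leaves the active sum, so $D$ decreases by exactly that degree, namely $\Theta_{t+1}$. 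Taking conditional expectations and using $\E[Y_t] = \mu$ gives
\begin{equation*}
\E[D_{t+1} \mid H_t] = D_t + (p_v+p_e)\mu - p_d\,\E[\Theta_{t+1}\mid H_t],
\end{equation*}
and hence, after taking total expectations, $\E[D_{t+1}] = \E[D_t] + (p_v+p_e)\mu - p_d\,\E[\Theta_{t+1}]$.

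Next I would telescope this identity from the initial step up to $t$, obtaining
\begin{equation*}
\E[D_t] = \E[D_0] + (p_v+p_e)\mu\, t - p_d \sum_{\tau=1}^{t}\E[\Theta_\tau].
\end{equation*}
Dividing by $t$ and letting $t \to \infty$, the term $\E[D_0]/t$ vanishes (as $H_0$ is fixed, with a constant number of vertices and hyperedges of constant cardinality, so $D_0$ is a finite constant), while $\frac{1}{t}\sum_{\tau=1}^t \E[\Theta_\tau] \to \theta$ by Assumption~(4). This yields $\lim_{t\to\infty}\E[D_t]/t = (p_v+p_e)\mu - p_d\theta$, as claimed.

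I expect the only real care to be needed in two places. The first is the bookkeeping in the recursion: one must verify that in the vertex- and edge-events the entire cardinality $Y_t$ lands on active vertices (true because the new vertex is active and all selected vertices are drawn from $\mathcal{A}_t$), and that deactivation subtracts precisely $\Theta_{t+1}$ and alters no other vertex's active status. The second, more conceptual point is why one cannot simply apply the Stolz--Ces\`aro theorem to $a_t = \E[D_t]$, $b_t = t$ directly: that would require the raw increments $\E[\Theta_{t+1}]$ to converge, whereas Assumption~(4) only furnishes convergence of their Ces\`aro averages $\frac{1}{t}\sum_{\tau=1}^t \E[\Theta_\tau]$. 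This is exactly why the telescoping route, which feeds the average in directly, is the appropriate one (equivalently, one applies Stolz--Ces\`aro to the averaged sequence rather than to the individual increments).
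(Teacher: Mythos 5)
Your proposal is correct and takes essentially the same route as the paper: the paper's proof simply writes down your telescoped identity directly as $\E[D_t] = 1 + (p_v+p_e)\sum_{\tau=1}^{t}\E[Y_{\tau}] - p_d \sum_{\tau=1}^{t}\E[\Theta_{\tau}]$ (with $\E[D_0]=1$ for the single-vertex $H_0$) and concludes by dividing by $t$ and invoking Assumption (4). Your more detailed per-step bookkeeping and your remark on why Stolz--Ces\`aro applied to the raw increments would not suffice are both sound, but they elaborate rather than depart from the paper's argument.
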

\begin{proof}
	The initial hypergraph $H_0$ consists of a single vertex of degree $1$. Since at time $t \geq 1$ we add a hyperedge of cardinality $Y_t$ with probability $p_v+p_e$ and we deactivate a vertex of degree $\Theta_t$ with probability $p_d$ we get
	\begin{equation}  \label{eq:Dt_Theta}
		\E[D_t] = 1 + (p_v+p_e)\sum_{\tau=1}^{t}\E[Y_{\tau}] - p_d \sum_{\tau=1}^{t}\E[\Theta_{\tau}].
	\end{equation}
	The conclusion follows.
\end{proof}

\begin{theorem} \label{thm:deg_dist}
	Consider a hypergraph $H = H(H_0,p_e,p_v,Y)$ for any $t>0$. By Assumptions (1-4) the degree distribution of $H$ follows a power-law with an exponential cutoff, i.e.,
	\[
		\E\left[\frac{N_{k,t}}{|V_t|}\right] \sim c \cdot k^{-\beta} \gamma^k \left(\frac{1}{k} + \delta \right)
	\]
	\[
	\begin{split}
	for \quad \beta & = \frac{\mu(p_v+p_e)-p_d \theta}{p_v(\mu-1)+p_e \mu + p_d},	\quad \quad \gamma  = \frac{p_v(\mu-1)+p_e \mu}{p_v(\mu-1)+p_e\mu+p_d}, \\
	\delta & = \frac{p_d}{\mu(p_v+p_e)-p_d \theta}, \quad \quad \quad \quad c  = \frac{\beta \cdot \Gamma(1+\beta)}{\gamma},
	\end{split}
	\]
	where $\Gamma(x)$ stands for the gamma function \textnormal{(}$\Gamma(x) = \int_{0}^{\infty} t^{x-1}e^{-t} \,\mathrm{d}t$\textnormal{)}.
\end{theorem}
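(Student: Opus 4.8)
The plan is to use the master-equation (rate-equation) method, following the expected numbers of active and inactive vertices of each degree separately and combining them only at the end. Write $a_{k,t}=\E[A_{k,t}]$, $\iota_{k,t}=\E[I_{k,t}]$, and set $\bar{s}=p_v(\mu-1)+p_e\mu$ and $d=(p_v+p_e)\mu-p_d\theta$; by Lemma~\ref{fact:Dt/t}, $\E[D_t]\sim d\,t$. Conditioned on $H_t$, a degree-increment ``slot'' lands on a fixed active vertex of degree $j$ with probability $j/D_t$; a vertex-step (probability $p_v$) contributes $Y_t-1$ such slots together with one fresh degree-$1$ vertex, an edge-step (probability $p_e$) contributes $Y_t$ slots, and a deactivation-step (probability $p_d$) freezes one active vertex chosen proportionally to degree. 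Using the independence of $Y_t$ from $H_t$ and discarding the event that one vertex is hit twice in a single step (whose contribution is $o(1/t)$ by Lemma~\ref{lemma:Y2/D2}), taking expectations gives
\[
a_{k,t+1}=a_{k,t}+\bar{s}(k-1)\,\E\!\left[\tfrac{A_{k-1,t}}{D_t}\right]-(\bar{s}+p_d)\,k\,\E\!\left[\tfrac{A_{k,t}}{D_t}\right]+p_v\,\mathbf{1}_{\{k=1\}}.
\]
Lemma~\ref{lemma:A/D} (via Assumption~(3)) replaces each $\E[A_{j,t}/D_t]$ by $a_{j,t}/\E[D_t]+o(1)$, and with $\E[D_t]\sim d\,t$ the recursion is brought into the form $a_{k,t+1}=(1-b_t/t)a_{k,t}+c_t$ required by Lemma~\ref{lemma:rec_seq}.

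Next I would induct on $k$. For $k=1$ the first increment term vanishes, $b_t\to(\bar{s}+p_d)/d=1/\beta$ and $c_t\to p_v$, so Lemma~\ref{lemma:rec_seq} yields $\alpha_1:=\lim_t a_{1,t}/t=p_v\beta/(1+\beta)$. Assuming $\alpha_{k-1}$ exists, the degree-$k$ recursion has $b_t\to k/\beta$ and $c_t\to\bar{s}(k-1)\alpha_{k-1}/d$, whence, after substituting $\bar{s}/d=\gamma/\beta$,
\[
\alpha_k=\frac{\gamma(k-1)}{k+\beta}\,\alpha_{k-1}.
\]
Telescoping gives $\alpha_k=\alpha_1\,\gamma^{k-1}\,(k-1)!\,\Gamma(2+\beta)/\Gamma(k+1+\beta)$, and the ratio asymptotics $\Gamma(k)/\Gamma(k+1+\beta)\sim k^{-(1+\beta)}$ produce $\alpha_k\sim p_v\,c\,\gamma^k k^{-\beta-1}$ with $c=\beta\,\Gamma(1+\beta)/\gamma$ (using $\Gamma(2+\beta)=(1+\beta)\Gamma(1+\beta)$ and the value of $\alpha_1$).

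For the inactive vertices, a frozen vertex never changes degree, so $\iota_{k,t}$ only accumulates: $\iota_{k,t+1}=\iota_{k,t}+p_d k\,\E[A_{k,t}/D_t]$, and after the same substitutions the increment is $(p_d k/d)(a_{k,t}/t)(1+o(1))$. Since this recurrence is a pure sum rather than of the shape handled by Lemma~\ref{lemma:rec_seq}, I would apply the Stolz--Ces\`aro Theorem~\ref{thm:S-C} with $b_t=t$ to obtain $\lim_t\iota_{k,t}/t=(p_d k/d)\alpha_k=k\delta\alpha_k$, where $\delta=p_d/d$. Adding the two parts,
\[
\lim_{t\to\infty}\frac{\E[N_{k,t}]}{t}=\alpha_k(1+k\delta)\sim p_v\,c\,k^{-\beta}\gamma^k\Bigl(\tfrac{1}{k}+\delta\Bigr).
\]
Finally, Lemma~\ref{lemma:Nkt_limit} passes from $\E[N_{k,t}]/t$ to $\E[N_{k,t}/|V_t|]$; the division by $|V_t|\sim p_v t$ absorbs the spurious factor $p_v$ and leaves exactly the claimed constant $c$, completing the argument.

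The main obstacle is the rigorous justification of the one-step master equation rather than the algebra that follows. One must show that within a single time step the probability of hitting the same active vertex more than once --- so that some degree jumps by at least $2$ --- is negligible and contributes only $o(1/t)$ to $a_{k,t+1}$; this is precisely the estimate supplied by Lemma~\ref{lemma:Y2/D2} together with Assumption~(2). One must also carefully justify exchanging the conditional expectation $\E[A_{k,t}/D_t]$ for $a_{k,t}/\E[D_t]$, which rests on the concentration of $D_t$ (Assumption~(3)) through Lemma~\ref{lemma:A/D}. The remaining structural point --- keeping active and inactive vertices apart and feeding the active limit $\alpha_k$ into the Stolz--Ces\`aro step for the inactive count --- is the genuine departure from the classical argument but becomes routine once the rate equations are established.
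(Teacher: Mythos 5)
Your proposal is correct and takes essentially the same route as the paper's proof: a master equation applied separately to active and inactive vertices, Lemma~\ref{lemma:rec_seq} driving the induction on $k$ to get $\bar{A}_k=\bar{A}_{k-1}\gamma(k-1)/(k+\beta)$, the Stolz--Ces\`aro theorem for $\bar{I}_k=k\delta\bar{A}_k$, error control via Lemmas~\ref{lemma:Y2/D2} and~\ref{lemma:A/D}, and Lemma~\ref{lemma:Nkt_limit} to pass from $\E[N_{k,t}]/t$ to $\E\left[N_{k,t}/|V_t|\right]$. One small imprecision: the aggregated multiple-hit correction to $a_{k,t+1}$ is $o(1)$ rather than $o(1/t)$ (it is $O(t)\cdot\E\left[Y_t^2/D_{t-1}^2\right]$, exactly the paper's bound on $\E[\varphi]$), but this is precisely the $o(1)$ slack that Lemma~\ref{lemma:rec_seq} absorbs into $c_t$, so the argument stands.
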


\begin{remark*}
	The theorem and its proof presented below remain true if we relax Assumption~(3) just to $D_t = \E[D_t] + o(t)$ whp. Nevertheless, we leave the stronger version of (3) on the list of assumptions as it will be needed in Section \ref{sec:theta} for estimating~$\theta$.
\end{remark*}

\begin{remark*}
	Setting $p_d=0$ in the above theorem (i.e., considering the process without deactivation) results in the power-law degree distribution, namely	$\E\left[\frac{N_{k,t}}{|V_t|}\right] \sim c \cdot k^{-(\beta+1)}$, where $\beta = \frac{\mu}{\mu-p_v}$ and $c = \beta \cdot \Gamma(1+\beta)$. This is in accordance with the result obtained in~\cite{ALP_hyper}.
\end{remark*}


The proof below contains shortcuts in calculations. See the Appendix A for the full proof.
\vspace{5pt}
\begin{proof}
	We take a standard master equation approach that can be found e.g. in Chung and Lu book \cite{ChLu_book} about complex networks. 
	However, we apply it separately to the number of active vertices and the number of deactivated vertices.
	
	Recall that $N_{k,t}$ denotes the number of vertices of degree $k$ at time $t$. We need to show that $\lim_{t \rightarrow \infty} \E\left[\frac{N_{k,t}}{|V_t|}\right] \sim c \cdot k^{-\beta} \gamma^k \left(\frac{1}{k} + \delta \right)$ 
	for the proper constants $c, \beta, \gamma$ and $\delta$.
	However, by Lemma \ref{lemma:Nkt_limit} we know that it suffices to show that
	\[
	\lim_{t \rightarrow \infty} \frac{\E[N_{k,t}]}{t} \sim p_v \cdot c \cdot k^{-\beta} \gamma^k \left(\frac{1}{k} + \delta \right).
	\]
	
	Recall that $N_{k,t} = A_{k,t} + I_{k,t}$. First, let us evaluate $\lim_{t \rightarrow \infty} \frac{\E[A_{k,t}]}{t}$ using the mathematical induction on $k$. In this part we follow closely the lines of the proof that can be found in~\cite{ALP_hyper}. Consider the case $k=1$. Since $H_0$ consists of a single hyperedge of cardinality 1 over
	a single vertex, we have $A_{1,0} = 1$. To formulate a master equation, let us make the following observation for $t \geq 1$. An active vertex remains in $\mathcal{A}_{1,t}$ if it had degree 1 at step $t - 1$ and was neither selected to a hyperedge, nor deactivated. Recall that a vertex from $\mathcal{A}_{1,t-1}$ is chosen at step $t$ in a single trial to the new hyperedge with probability $1/D_{t-1}$ thus the chance that it won't be selected to the hyperedge of cardinality $y$ equals $(1-1/D_{t-1})^y$.
	Also, in each step, with probability $p_v$, a single new active vertex of degree $1$ is added to the hypergraph.
	Let $\mathcal{F}_t$ denote a $\sigma$-algebra associated with the probability space at step $t$. 
	For $t \geq 1$ we have
	\begin{equation} \label{equation:E[A_(1,t)]-recurrence}
		\E[A_{1,t} | \mathcal{F}_{t-1}] 
		= p_v A_{1,t-1} \left(1 - \frac{1}{D_{t-1}}\right)^{Y_t - 1}
		 + p_e A_{1,t-1} \left(1 - \frac{1}{D_{t-1}}\right)^{Y_t} 
		+ p_d A_{1,t-1} \left(1 - \frac{1}{D_{t-1}}\right)
		+ p_v. 
	\end{equation}
	After taking the expectation on both sides of (\ref{equation:E[A_(1,t)]-recurrence}) we derive upper and lower bounds on $\E[A_{1,t}]$. By Bernoulli's inequality ($(1+x)^n \geq 1+nx$ for $n \in \mathbb{N}$ and $x\geq-1$), Lemma \ref{lemma:A/D} (thus by Assumption (3) necessary for it) and the independence of $Y_t$ from $A_{1,t-1}$ and $D_{t-1}$ 
	\begin{equation} \label{equation:E[A_(1,t)]-lower-bound}
    \begin{split} 
		\E[A_{1,t}] &
		\geq p_v \E\left[A_{1,t-1} \left({1 - \frac{Y_t - 1}{D_{t-1}}}\right)\right] 
		+ p_e \E\left[A_{1,t-1} \left({1 - \frac{Y_t}{D_{t-1}}}\right)\right] \\
		& \quad + p_d \E\left[{A_{1,t-1} \left({1 - \frac{1}{D_{t-1}}}\right)}\right]
		+ p_v \\
		&= \E[A_{1,t-1}] \left({
			1 - \frac{p_v (\mu - 1) + p_e \mu + p_d}{\E[{D_{t-1}}]}}\right) + p_v - o(1). 
	\end{split}
	\end{equation}
	On the other hand, since $(1 - x)^n \leq 1/(1 + nx)$ for $x\in[0, 1]$ and $n\in\mathbb{N}$, and $A_{1,t-1} \leq t$, by Lemma \ref{lemma:Y2/D2} (thus by Assumptions (1) and (2) necessary for it) and Lemma \ref{lemma:A/D} (thus by Assumption (3)) we have
	\begin{equation} \label{equation:E[A_(1,t)]-upper-bound}
	\begin{split}
		\E[{A_{1,t}}]
		&\leq p_v \E\left[{\frac{A_{1,t-1}}{1 + (Y_t - 1)/D_{t-1}}}\right]
		+ p_e \E\left[{\frac{A_{1,t-1}}{1 +  Y_t     /D_{t-1}}}\right] 
		+ p_d\E\left[{A_{1,t-1} \left({1 - \frac{1}{D_{t-1}}}\right)}\right]
		+ p_v \\
		%
		&= \E[{A_{1,t-1}}] \left({
			1 - \frac{p_v (\mu - 1) + p_e \mu + p_d}{\E[{D_{t-1}}]}
		}\right) + p_v + o(1). 
	\end{split}
	\end{equation}
	From (\ref{equation:E[A_(1,t)]-lower-bound}) and (\ref{equation:E[A_(1,t)]-upper-bound}) we get
	\[
		\E[{A_{1,t}}] = \E[{A_{1,t-1}}] \left({
			1 - \frac{p_v (\mu - 1) + p_e \mu + p_d}{\E[{D_{t-1}}]}
		}\right) + p_v + o(1). 
	\]
	Now, we apply Lemma \ref{lemma:rec_seq} to the above equation choosing
		\[
		a_t  = \E[{A_{1,t}}], \quad
		b_t  =  \frac{p_v (\mu - 1) + p_e \mu + p_d}{\E[{D_{t-1}}]/t}, \quad
		c_t  =  p_v + o(1).
		\]
	We have $\lim_{t\to\infty} c_t = p_v$ and, by Fact \ref{fact:Dt/t} (thus by Assumptions (1) and (4) implying it), $\lim_{t\to\infty} b_t = \frac{p_v (\mu-1) + p_e \mu + p_d}
	{\mu (p_v + p_e) - p_d \theta} =: 1/\beta$ thus 
	\[
	 \lim_{t\to\infty} \frac{a_t}{t} = \lim_{t\to\infty} \frac{\E[{A_{1,t}}]}{t} = \frac{p_v}{1 + 1/\beta} =: \bar{A}_1.
	\]
	Now, we assume that the limit $\lim_{t\to\infty} \frac{\E[{A_{k-1,t}}]}{t}$ exists and equals $\bar{A}_{k-1}$ and we will show by induction on $k$ that the analogous limit for $\E[{A_{k,t}}]$ exists. Let us again formulate a master equation, this time for $k>1$. We have $A_{k,0}=0$ and for $t \geq 1$ an active vertex appears in $A_{k,t}$ if it was active at step $t-1$, had degree $k-l$ and was chosen exactly $l$ times to a hyperedge, or it had degree $k$ and was not selected for deactivation. Let $B({l,n,p}) = \binom{n}{l} p^l (1-p)^{n-l}$. We have
	\[
		\begin{split}
			\E[{A_{k,t} | \mathcal{F}_{t-1}} ]
			& = p_v \smashoperator{\sum_{l=0}^{\min\{Y_t-1, k-1\}}} 
			A_{k-l,t-1} B\left({l, Y_t - 1, \frac{k-l}{D_{t-1}}}\right)
			 + p_e \smashoperator{\sum_{l=0}^{\min\{Y_t, k-1\}}} 
			A_{k-l,t-1} B\left({l, Y_t , \frac{k-l}{D_{t-1}}}\right) \\
			& \quad + p_d A_{k,t-1} \left({1 - \frac{k}{D_{t-1}}}\right).
		\end{split}
	\]
	Taking the expectation on both sides we get
	\[
		\E[{A_{k,t}}] = 
		\E[\psi] + p_v \E[{\varphi(Y_t - 1)}] + p_e \E[{\varphi(Y_t)}], 
	\]
	where
	\[
	\begin{split}
		\psi 
		&= p_v \sum_{l=0}^1
		A_{k-l,t-1} B\left({l, Y_t - 1, \frac{k-l}{D_{t-1}}}\right)
		 + p_e \sum_{l=0}^1 
		A_{k-l,t-1} B\left({l, Y_t, \frac{k-l}{D_{t-1}}} \right) 
		 + p_d A_{k,t-1} \left({1 - \frac{k}{D_{t-1}}}\right) \\
	\end{split}
	\]
	and 
	\[\quad \varphi(n) = \smashoperator{\sum_{l=2}^{\min\{n, k-1\}}} A_{k-l,t-1} B\left({l, n, \frac{k-l}{D_{t-1}}}\right).\]

	\noindent
	We will show that only the term $\E[\psi]$ is significant and that the terms $\E[{\varphi(Y_t - 1)}]$ and $\E[{\varphi(Y_t)}]$ converge to 0 as $t\to\infty$. We have
	\[
	 \begin{split}
		\varphi(Y_t) 
		&\leq \sum_{l=2}^{k-1} A_{k-l,t-1} 
		\binom{Y_t}{l} \left({\frac{k - l}{D_{t-1}}}\right)^l \left({1 - \frac{k-l}{D_{t-1}}}\right)^{Y_t - l}  
		 = O(t) \frac{Y_t^2}{D_{t-1}^2}.
	\end{split}
	\]
	Hence by Lemma \ref{lemma:Y2/D2} (thus by Assumptions (1) and (2)) 
	we get $\E[\varphi(Y_t)] = o(1)$ and, similarly, $\E[\varphi(Y_t-1)] = o(1)$. 
	The bounds for $\E[\psi]$ can be derived analogously to the ones for $\E[A_{1,t}]$ and they give
	\begin{equation}
	\begin{split}
		\E[{A_{k,t}}]
		&= \E[{A_{k,t-1}}] \left({
			1 - \frac{k \left({p_v (\mu-1) + p_e \mu + p_d}\right)}{\E[{D_{t-1}}]}
		}\right) \\
	& \quad + \E[{A_{k-1,t-1}}] \frac{(k-1) \left({p_v (\mu-1) + p_e \mu}\right)}{\E[{D_{t-1}}]} + o(1).
	\end{split}
	\end{equation}
	Recall that by the induction assumption $\lim_{t\to\infty} \E[{A_{k-1,t}}]/t = \bar{A}_{k-1}$. Now, we apply again Lemma \ref{lemma:rec_seq} to the above equation choosing
	\[
	\begin{split}
		& a_t = \E[{A_{k,t}}], \quad
		b_t = \frac{
			k (p_v (\mu-1) + p_e \mu + p_d)
		}{
			\E[{D_{t-1}}]/t
		},  \quad
		 c_t = \frac{\E[{A_{k-1,t-1}}]}{t}
		\frac{
			(k - 1) (p_v (\mu-1) + p_e \mu )
		}{
			\E[{D_{t-1}}]/t
		}
		+ o(1).
	\end{split}
	\]
	By Fact \ref{fact:Dt/t} (thus by Assumptions (1) and (4)) we have $\lim_{t\to\infty} b_t = k / \beta$ and $\lim_{t\to\infty} c_t = \bar{A}_{k-1} \frac{(k - 1) (p_v (\mu-1) + p_e \mu)}{\mu (p_v + p_e) - p_d \theta}$
	thus
	\begin{equation}
		\lim_{t\to\infty} \frac{a_t}{t}
		= \lim_{t\to\infty} \frac{\E[{A_{k,t}}]}{t}
		= \bar{A}_k
		= \bar{A}_{k-1} \frac{
			(k - 1) \gamma
		}{
			k + \beta
		},
	\end{equation}
	where $\gamma = \frac{p_v (\mu - 1) + p_e \mu}{p_v (\mu - 1) + p_e \mu + p_d}$. 
	Thus we got
	\[
	\begin{split}
		& \bar{A}_1 =  p_v \beta \frac{1}{(1 + \beta)}, \quad 
		\bar{A}_2 = p_v \beta \frac{\gamma}{(1 + \beta)(2 + \beta)}, \ldots,  \quad 
		 \bar{A}_k = p_v \beta \frac{\gamma^{k-1} (k-1)!}{(1 + \beta)(2 + \beta) \dots (k + \beta)}.
	\end{split}
	\]
	Since $\lim_{k \rightarrow \infty} \frac{\Gamma(k)k^{\alpha}}{\Gamma(k+\alpha)} = 1$ for constant $\alpha \in \mathbb{R}$ we have
	\begin{equation} \label{eq:Ak}
		\begin{split}
		\lim_{t\to\infty}& \frac{\E[{A_{k,t}}]}{t} = \bar{A}_k =
		\frac{p_v \beta}{\gamma} 
		\frac{\gamma^k \Gamma(1 + \beta) \Gamma(k)}
		{\Gamma(k + \beta + 1)}
	 \sim p_v \cdot c \cdot \gamma^k k^{-(\beta + 1)} 
		\end{split}
	\end{equation}
	with $c = \frac{\beta \cdot \Gamma(1 + \beta)}{\gamma}$.

	Now, let us evaluate $\lim_{t\to\infty} \frac{\E[{I_{k,t}}]}{t}$. We have $I_{k,0} = 0$ for all $k \geq 1$. For $t \geq 1$ the expected number of inactive vertices of degree $k \geq 1$ at step $t$, given $\mathcal{F}_{t-1}$, can be expressed as
	\[
		\E[{I_{k,t} | \mathcal{F}_{t-1}}] =
		I_{k,t-1} + p_d A_{k,t-1} \frac{k}{D_{t-1}},
	\]
	since inactive vertices of degree $k$ remain in $I_{k,t}$ forever and a vertex of degree $k$ becomes inactive if it was selected in step $t-1$ for deactivation.
	Taking the expectation on both sides, by Lemma \ref{lemma:A/D} (thus by Assumption (3)), we obtain
	\[
		\E[{I_{k,t}}] = 
		\E[{I_{k,t-1}} ]
		+ p_d \E[{A_{k,t-1}}] \frac{k}{\E[{D_{t-1}}]} + o(1).
	\]
	Then, by Fact \ref{fact:Dt/t} (thus by Assumptions (1) and (4)),
	\[
	\begin{split}
		\lim_{t\to\infty} & \left(\E[I_{k,t}] - \E[{I_{k,t-1}}]\right)
		= \lim_{t\to\infty} p_d k
		\frac{\E[{A_{k,t-1}}]\cdot t}{t \cdot \E[{D_{t-1}}]}
		 + o(1) 
		  = \bar{A}_k k \delta, 
	\end{split}
	\]
	where $\delta = \frac{p_d}{(p_v + p_e) \mu - p_d \theta}$. 
	And, by Stolz--Ces\`aro theorem (Theorem \ref{thm:S-C}), we obtain
	\begin{equation} \label{eq:Ik}
		\bar{I}_k := 
		\lim_{t\to\infty} \frac{\E[{I_{k,t}}]}{t} = 
		\lim_{t\to\infty} (\E[{I_{k,t}}] - \E[{I_{k,t-1}}]) = 
		\bar{A}_k k \delta.
	\end{equation}
	Finally, by (\ref{eq:Ak}) and (\ref{eq:Ik})
	\[
	\begin{split}
		\lim_{t\to\infty} & \frac{\E[{N_{k,t}}]}{t} = \lim_{t\to\infty} \frac{\E[{A_{k,t}}]+\E[{I_{k,t}}]}{t}  = 
		\bar{A}_k + \bar{I}_k 
		 = \bar{A}_k (1 + k\delta) \sim
		p_v \cdot c \cdot k^{-\beta} \gamma^k  \left({\frac{1}{k} + \delta}\right).
	\end{split}
	\]
\end{proof}

\section{Estimating the limiting value $\theta$} \label{sec:theta}

This section is devoted to estimating $\theta$ which appears as one of the parameters in the degree distribution of our hypergraph model $H$ (consult Theorem \ref{thm:deg_dist}). Recall that $\Theta_t$ stands for the degree of a vertex chosen for deactivation at time $t$ and it appears in the fourth assumption needed to prove Theorem \ref{thm:deg_dist} \hspace{3pt} \textbf{4.}~$\lim_{t \rightarrow \infty} \frac{1}{t} \sum_{\tau=1}^{t} \E[\Theta_{\tau}] = \theta \in \mathbb{R}_{>0}$.

Let us start with showing that $\sum_{\tau=1}^{t} \E[\Theta_{\tau}]$ is of order $\Theta(t)$.

\begin{lemma} \label{lemma:linear_t}
	Assume that $\E[Y_t] = \mu$ for all $t>0$. Then 
	\[
		p_d \leq \frac{1}{t}\sum_{\tau=1}^{t} \E[\Theta_{\tau}] \leq 1 + \frac{p_v(\mu-1)+p_e\mu}{p_d}.
	\]
\end{lemma}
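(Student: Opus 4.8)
The plan is to read off both inequalities from the exact expression for $\E[D_t]$ recorded in equation~\eqref{eq:Dt_Theta}. Using $\E[Y_\tau]=\mu$ for all $\tau$, that identity reads $\E[D_t] = 1 + (p_v+p_e)\mu t - p_d\sum_{\tau=1}^{t}\E[\Theta_\tau]$, which I rearrange into
\[
\sum_{\tau=1}^{t}\E[\Theta_\tau] = \frac{1 + (p_v+p_e)\mu t - \E[D_t]}{p_d}.
\]
Thus controlling $\frac1t\sum_{\tau=1}^t\E[\Theta_\tau]$ is equivalent to controlling $\E[D_t]$ from both sides.

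For the lower bound I would argue directly on $\Theta_\tau$ rather than through $D_t$. Every vertex enters the hypergraph inside a hyperedge and thereafter can only gain degree, so each vertex of $H_\tau$ has degree at least $1$; in particular the vertex selected for deactivation satisfies $\Theta_\tau \geq 1$ deterministically, whence $\E[\Theta_\tau]\geq 1$. Summing and dividing by $t$ gives $\frac1t\sum_{\tau=1}^t\E[\Theta_\tau]\geq 1 \geq p_d$, the final step because $p_d$ is a probability.

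For the upper bound I need a lower bound on $\E[D_t]$, and the cheapest one suffices: since $D_t=\sum_{u\in\mathcal{A}_t}\deg_t(u)\geq\sum_{u\in\mathcal{A}_t}1 = A_t$, it is enough to compute $\E[A_t]$. The count of active vertices satisfies a one-step recurrence, because a step adds a fresh active vertex with probability $p_v$, removes one active vertex with probability $p_d$, and leaves the count unchanged with probability $p_e$; hence $\E[A_t\mid\mathcal{F}_{t-1}] = A_{t-1} + p_v - p_d$, and with $A_0=1$ this yields $\E[A_t] = 1 + (p_v-p_d)t$. Substituting $\E[D_t]\geq 1 + (p_v-p_d)t$ into the rearranged identity and dividing by $t$ collapses everything to
\[
\frac1t\sum_{\tau=1}^{t}\E[\Theta_\tau] \leq \frac{(p_v+p_e)\mu-(p_v-p_d)}{p_d} = 1 + \frac{p_v(\mu-1)+p_e\mu}{p_d},
\]
which is the asserted upper bound.

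There is no serious obstacle here; the only thing to notice is that the right quantity to squeeze $\sum_\tau\E[\Theta_\tau]$ against is the active-degree sum $D_t$, and that the trivial inequality $A_t\leq D_t$ together with the elementary recurrence for $\E[A_t]$ already pins $\E[D_t]$ from below tightly enough to produce the exact constant. Everything else is bookkeeping.
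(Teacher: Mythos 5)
Your proposal is correct and follows essentially the same route as the paper: the same rearrangement of equation~(\ref{eq:Dt_Theta}), and the same bound $\E[D_t]\geq\E[A_t]=1+(p_v-p_d)t$ for the upper estimate. The only (immaterial) difference is in the lower bound, where the paper compares $\sum_{\tau=1}^{t}\E[\Theta_{\tau}]$ against $\E[I_t]=p_d t$ while you use $\Theta_\tau\geq 1$ directly to get the slightly stronger bound $1\geq p_d$; both rest on the same observation that every vertex has degree at least~$1$.
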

\begin{proof}
	By equation (\ref{eq:Dt_Theta}) we get $\sum_{\tau=1}^{t}\E[\Theta_{\tau}] = \frac{1}{p_d}\left(1 + (p_v+p_e) \mu t - \E[D_t]\right)$.
	Note that $\E[D_t] \geq \E[A_t] = 1 + (p_v-p_d)t$ (we assume $p_v>p_d$) thus on one hand $\sum_{\tau=1}^{t}\E[\Theta_{\tau}] \leq t \left(1 + \frac{p_v(\mu-1)+p_e\mu}{p_d} \right)$
	and on the other $\sum_{\tau=1}^{t}\E[\Theta_{\tau}] \geq \E[I_t] = p_d t$.
\end{proof}

Unfortunately, we were not able to prove that the limit $\lim_{t \rightarrow \infty} \frac{1}{t} \sum_{\tau=1}^{t} \E[\Theta_{\tau}]$ exists. However, we support this assumption by simulations in Section \ref{sec:experiments}. Whereas in this section we show, assuming that the limit exists, how to estimate it.

Throughout this section $F(a,b;c;z)$ denotes the Gaussian hypergeometric function, i.e., for $a,b,c,z \in \mathbb{C}$, $|z|<1$
\[
F(a,b;c;z) = \sum_{n=0}^{\infty} 
\frac{(a)_n (b)_n}{(c)_n} \frac{z^n}{n!}, 
\]
where $(x)_n = \Gamma(x+n)/\Gamma(x)$. 

\begin{lemma} \label{lemma:k2Akt/Dt}
	Assume that $\Pr[D_t \neq \E[D_t] + o(t)] = o(1/t)$. Then 
	\[
		\E\left[ \frac{\sum_{k \geq 1} k^2 A_{k,t}}{D_t}\right] = \frac{\E[\sum_{k \geq 1} k^2 A_{k,t}]}{\E[D_t]} + o(1).
	\]
\end{lemma}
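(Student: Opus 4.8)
The plan is to follow the proof of Lemma~\ref{lemma:A/D} almost verbatim; the one genuine new difficulty is that, whereas there $A_{k,t}/D_t\le 1$ is bounded, here the ratio is not. Write $S_t=\sum_{k\ge 1}k^2A_{k,t}$ for the sum of squared degrees of the active vertices, and let $\mathcal{E}_t=\{\,D_t=\E[D_t]+o(t)\,\}$ be the concentration event, so that $\Pr[\mathcal{E}_t^c]=o(1/t)$ by hypothesis. Since $1+(p_v-p_d)t=\E[A_t]\le\E[D_t]\le 1+(p_v+p_e)\mu t$ with $p_v>p_d$, we have $\E[D_t]=\Theta(t)$ (compare Fact~\ref{fact:Dt/t}), and on $\mathcal{E}_t$ also $D_t=\Theta(t)$. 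I would then split $\E[(S_t/D_t)]=\E[(S_t/D_t)\mathbf{1}_{\mathcal{E}_t}]+\E[(S_t/D_t)\mathbf{1}_{\mathcal{E}_t^c}]$ and handle the two terms as in Lemma~\ref{lemma:A/D}.

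Two ingredients drive the argument. The first is the linear bound $\E[S_t]=O(t)$, the discrete analogue of the degree distribution having a finite second moment; it holds precisely because of the exponential cutoff (by Theorem~\ref{thm:deg_dist}, $\E[A_{k,t}]/t\to\bar A_k\sim p_v c\,\gamma^k k^{-(\beta+1)}$ with $\gamma\in(0,1)$, so $\sum_k k^2\bar A_k<\infty$). The second is the deterministic inequality
\[
\frac{S_t}{D_t}\le\sqrt{S_t}.
\]
Indeed, writing $\Delta_t=\max\{k:A_{k,t}>0\}$ for the largest active degree, $S_t=\sum_k k^2A_{k,t}\le\Delta_t\sum_k kA_{k,t}=\Delta_tD_t$ gives $S_t/D_t\le\Delta_t$, while $S_t\ge\Delta_t^2$ (the vertex of degree $\Delta_t$ alone contributes $\Delta_t^2$) gives $\Delta_t\le\sqrt{S_t}$.

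On $\mathcal{E}_t$ concentration and $D_t=\Theta(t)$ give $\bigl|\tfrac{\E[D_t]}{D_t}-1\bigr|=o(1)$ as a deterministic bound, so replacing $D_t$ by $\E[D_t]$ costs only $\tfrac{o(1)}{\E[D_t]}\E[S_t]=o(1)$ by the linear bound, and
\[
\E\!\left[\frac{S_t}{D_t}\,\mathbf{1}_{\mathcal{E}_t}\right]=\frac{\E[S_t\,\mathbf{1}_{\mathcal{E}_t}]}{\E[D_t]}+o(1)=\frac{\E[S_t]}{\E[D_t]}-\frac{\E[S_t\,\mathbf{1}_{\mathcal{E}_t^c}]}{\E[D_t]}+o(1).
\]
For the atypical part the deterministic inequality and Cauchy--Schwarz give
\[
\E\!\left[\frac{S_t}{D_t}\,\mathbf{1}_{\mathcal{E}_t^c}\right]\le\E\!\left[\sqrt{S_t}\,\mathbf{1}_{\mathcal{E}_t^c}\right]\le\sqrt{\E[S_t]\,\Pr[\mathcal{E}_t^c]}=\sqrt{O(t)\cdot o(1/t)}=o(1),
\]
so here the $o(1/t)$ tail is used essentially and is matched exactly against the $O(t)$ growth of $\E[S_t]$.

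The main obstacle is the single remaining term $\E[S_t\,\mathbf{1}_{\mathcal{E}_t^c}]/\E[D_t]$, i.e. showing $\E[S_t\,\mathbf{1}_{\mathcal{E}_t^c}]=o(t)$. This is delicate because on $\mathcal{E}_t^c$ one only has the crude bound $S_t\le D_t^2$, and $\V[Y_t]=o(t)$ plus the $o(1/t)$ tail do not by themselves control $\E[D_t^2\,\mathbf{1}_{\mathcal{E}_t^c}]$. I would close it by proving that the family $\{S_t/t\}_t$ is uniformly integrable (e.g. via a slightly super-linear moment bound $\E[S_t^{1+\varepsilon}]=O(t^{1+\varepsilon})$, again inherited from the geometric tail $\gamma^k$), which together with $\Pr[\mathcal{E}_t^c]\to0$ yields $\E[S_t\,\mathbf{1}_{\mathcal{E}_t^c}]=o(t)$. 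A related, more routine point is proving $\E[S_t]=O(t)$ rigorously: a direct master equation for $\E[S_t]$ couples through the deactivation step to the third-moment sum $\sum_k k^3A_{k,t}$, a moment hierarchy, so rather than iterating that recursion I would read the linear growth off the degree distribution of Theorem~\ref{thm:deg_dist}, where the cutoff makes the relevant series converge.
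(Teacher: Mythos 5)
Your skeleton is the same as the paper's: split on the concentration event $B=\{D_t=\E[D_t]+o(t)\}$, use $\E[D_t]=\Theta(t)$ (from $p_v>p_d$), replace $D_t$ by $\E[D_t]$ at multiplicative cost $1+o(1)$ on $B$, and play the $o(1/t)$ tail against a deterministic bound on the ratio. Where you diverge is that deterministic bound. The paper works under the (informally justified) degree cap $k\le t$, which gives
\[
\frac{\sum_{k\ge1}k^2A_{k,t}}{D_t}\le\frac{t\sum_{k\ge1}kA_{k,t}}{D_t}=t,
\]
so the atypical event contributes at most $t\cdot o(1/t)=o(1)$ with \emph{no} moment information about $S_t=\sum_k k^2A_{k,t}$ at all — this is exactly why Assumption (3) carries the sharpened $o(1/t)$ tail (see the remark after Theorem \ref{thm:deg_dist}). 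Your inequality $S_t/D_t\le\Delta_t\le\sqrt{S_t}$ is correct and, unlike the paper's cap, unconditional, which is a genuine improvement at that spot; but its price is that your Cauchy--Schwarz bound $\sqrt{\E[S_t]\Pr[B^C]}$ now requires $\E[S_t]=O(t)$, and your typical-part bookkeeping additionally requires $\E[S_t\mathbf{1}_{B^C}]=o(t)$.

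Those two inputs are genuine gaps, and the fixes you sketch do not close them. You cannot read $\E[S_t]=O(t)$ off Theorem \ref{thm:deg_dist}: it gives, for each \emph{fixed} $k$, $\E[A_{k,t}]/t\to\bar A_k$, and passing from these pointwise limits to $\sum_k k^2\E[A_{k,t}]\le Ct$ is precisely an interchange of $\sum_k$ with $\lim_{t}$, requiring a uniform-in-$t$ bound of the type $\E[A_{k,t}]\le Ct\,{\gamma'}^{k}$ that is established nowhere; your own master-equation remark explains why a direct proof is blocked (deactivation couples $\E[S_t]$ to the third-moment sum). Moreover, Theorem \ref{thm:deg_dist} is proved under Assumptions (1)--(4), whereas the lemma is stated — and proved in the paper — under Assumption (3) alone; importing the theorem would only yield a weaker, conditional version of the lemma. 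The uniform-integrability bound $\E[S_t^{1+\varepsilon}]=O(t^{1+\varepsilon})$ is likewise announced rather than proved. In fairness, you have put your finger on real soft spots of the paper's own argument: its displayed chain is one-sided (the matching lower bound silently needs $\E[S_t\mathbf{1}_{B^C}]=o(t)$ too), the cap $k\le t$ is only waved at, and the final identity tacitly uses $\E[S_t]/\E[D_t]=O(1)$. But as a proof of the lemma as stated, your proposal is incomplete at exactly the steps you flag, while the paper's route — crude as it is — needs none of the machinery you leave unproved.
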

The proof can be found in the Appendix B.

\begin{theorem} \label{thm:R_fixed}
	Assume that the conditions $(1-4)$ from Section \ref{sec:model} hold. 
	Then $\theta$ is a fixed point of the function $R(x) := \frac{F(2,2;\rho(x); \gamma)}{F(1,2;\rho(x);\gamma)}$,
	where $\gamma = \frac{p_v(\mu-1)+p_e \mu}{p_v(\mu-1)+p_e\mu+p_d}$ and $\rho(x) = 2 + \frac{\mu(p_v+p_e)-p_d x}{p_v(\mu-1)+p_e \mu + p_d}$.
\end{theorem}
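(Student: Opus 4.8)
The plan is to identify $\theta$ with a ratio of two weighted sums of the limiting densities $\bar A_k = \lim_{t\to\infty}\E[A_{k,t}]/t$ that were computed in the proof of Theorem~\ref{thm:deg_dist}, and then to recognize those two sums as values of the Gauss hypergeometric function. First I would pin down $\E[\Theta_\tau]$. At step $\tau$ a deactivation removes a vertex drawn from $\mathcal{A}_{\tau-1}$ in proportion to degree, so, conditioned on $\mathcal{F}_{\tau-1}$ and on a deactivation occurring, the expected degree of the removed vertex is
\[
\frac{\sum_{u\in\mathcal{A}_{\tau-1}}\deg_{\tau-1}(u)^2}{D_{\tau-1}} = \frac{\sum_{k\geq 1}k^2 A_{k,\tau-1}}{D_{\tau-1}}.
\]
Writing $S_t := \sum_{k\geq 1}k^2 A_{k,t}$, Lemma~\ref{lemma:k2Akt/Dt} (which invokes Assumption~(3)) lets me replace the expectation of the ratio by the ratio of expectations, giving $\E[\Theta_\tau] = \E[S_{\tau-1}]/\E[D_{\tau-1}] + o(1)$.

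Next I would compute the two linear growth rates. Since $D_t=\sum_{k\geq1}k A_{k,t}$, Lemma~\ref{fact:Dt/t} gives $\lim_{t\to\infty}\E[D_t]/t = (p_v+p_e)\mu - p_d\theta = \sum_{k\geq 1}k\bar A_k$, which is strictly positive. For the numerator I would argue $\lim_{t\to\infty}\E[S_t]/t = \sum_{k\geq 1}k^2\bar A_k$. Granting both, $\E[\Theta_\tau]$ converges, and because the Ces\`aro mean of a convergent sequence equals its limit, Assumption~(4) forces
\[
\theta = \frac{\sum_{k\geq 1}k^2\bar A_k}{\sum_{k\geq 1}k\bar A_k}.
\]
I expect the main obstacle to be justifying the interchange of limit and infinite summation in $\lim_{t\to\infty}\E[S_t]/t = \sum_{k\geq 1}k^2\bar A_k$, i.e.\ that the pointwise convergence $\E[A_{k,t}]/t\to\bar A_k$ may be summed against the weight $k^2$. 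Since $\bar A_k\sim p_v c\,\gamma^k k^{-(\beta+1)}$ decays geometrically, the target series converges; the work is to produce a uniform-in-$t$ geometric tail bound on $\E[A_{k,t}]/t$ so that dominated convergence applies. Lemma~\ref{lemma:linear_t}, which keeps the averages $\frac1t\sum_{\tau}\E[\Theta_\tau]$ bounded, serves as a consistency check here.

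Finally comes the hypergeometric evaluation. From the proof of Theorem~\ref{thm:deg_dist} we have $\bar A_k = p_v\beta\,\frac{\gamma^{k-1}(k-1)!}{(1+\beta)(2+\beta)\cdots(k+\beta)}$; substituting $n=k-1$ and using $(1)_n=n!$ and $(2)_n=(n+1)!$ rewrites this as $\bar A_{n+1}=\frac{p_v\beta}{1+\beta}\,\frac{(1)_n}{(2+\beta)_n}\gamma^n$. The elementary Pochhammer identities $(n+1)(1)_n=(2)_n$ and $(n+1)^2(1)_n=(2)_n^2/n!$ then give
\[
\sum_{k\geq 1}k\bar A_k=\frac{p_v\beta}{1+\beta}\,F(1,2;2+\beta;\gamma),\qquad \sum_{k\geq 1}k^2\bar A_k=\frac{p_v\beta}{1+\beta}\,F(2,2;2+\beta;\gamma),
\]
with both series convergent because $\gamma<1$. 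The common prefactor cancels in the ratio, and since $2+\beta=\rho(\theta)$ by the definitions of $\beta$ and $\rho$, I conclude
\[
\theta = \frac{F(2,2;\rho(\theta);\gamma)}{F(1,2;\rho(\theta);\gamma)} = R(\theta),
\]
so $\theta$ is a fixed point of $R$, as claimed.
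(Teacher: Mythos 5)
Your proposal is correct and follows essentially the same route as the paper's proof: the conditional expectation $\E[\Theta_t \mid \mathcal{F}_{t-1}] = \sum_{k\geq 1} k^2 A_{k,t-1}/D_{t-1}$, the replacement of the ratio by the ratio of expectations via Lemma~\ref{lemma:k2Akt/Dt}, passage to the limit $\theta = \sum_{k\geq 1}k^2\bar A_k / \sum_{k\geq 1}k\bar A_k$ using equation~(\ref{eq:Ak}) and Stolz--Ces\`aro (your ``Ces\`aro mean of a convergent sequence'' step is exactly the paper's choice $a_t=\sum_{\tau\leq t}\E[\Theta_\tau]$, $b_t=t$), and the identification with $F(2,2;\rho(\theta);\gamma)/F(1,2;\rho(\theta);\gamma)$ via $\rho(\theta)=2+\beta$. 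If anything, you are slightly more careful than the paper: you spell out the Pochhammer computation that the paper states without proof, and you explicitly flag the interchange of $\lim_{t\to\infty}$ with the infinite sums $\sum_k k^2\E[A_{k,t}]/t$, a step the paper also passes over silently when invoking equation~(\ref{eq:Ak}).
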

\begin{proof}
	Recall that $\theta = \lim_{t \rightarrow \infty} \frac{1}{t} \sum_{\tau=1}^{t} \E[\Theta_{\tau}]$ and $\Theta_t$ is the degree of a vertex chosen for deactivation at time $t$. Let $\mathcal{F}_t$ denote a $\sigma$-algebra associated with the probability space at step $t$. We have $\E[\Theta_t | \mathcal{F}_{t-1}] = \sum_{k \geq 1} k \frac{k A_{k,{t-1}}}{D_{t-1}}$, 
	hence taking expectation on both sides, applying Lemma \ref{lemma:k2Akt/Dt} (thus by Assumption (3)) and noting that $D_t = \sum_{k \geq 1} k A_{k,t}$ we get
	\[
		\begin{split}
		\E[\Theta_t] & = \E\left[ \frac{\sum_{k \geq 1} k^2 A_{k,t-1}}{D_{t-1}}\right] = \frac{\E[\sum_{k \geq 1} k^2 A_{k,t-1}]}{\E[D_{t-1}]} + o(1) 
		 = \frac{\E[\sum_{k \geq 1} k^2 A_{k,t-1}]}{\E[\sum_{k \geq 1} k A_{k,t-1}]} + o(1).
		\end{split}
	\]
	Now, by equation (\ref{eq:Ak}) (thus by Assumptions (1-4) needed to prove Theorem \ref{thm:deg_dist}) we write
	\[
		\lim_{t \rightarrow \infty} \E[\Theta_t] = \frac{\sum_{k \geq 1} k^2 \bar{A}_{k}}{\sum_{k \geq 1} k \bar{A}_k} = \frac{F(2,2;\rho(\theta); \gamma)}{F(1,2;\rho(\theta);\gamma)}.
	\]
	Finally, setting $a_t = \sum_{\tau = 1}^{t} \E[\Theta_{\tau}]$ and $b_t = t$ in Stolz-Ces{\`a}ro theorem (Theorem \ref{thm:S-C}) we obtain
	\[
		\theta = \lim_{t \rightarrow \infty} \frac{1}{t} \sum_{\tau=1}^{t} \E[\Theta_{\tau}] = \frac{F(2,2;\rho(\theta); \gamma)}{F(1,2;\rho(\theta);\gamma)}.
	\]
\end{proof}

From now on we consider the behavior of $R(x)$ only in the interval $[0,\hat{\theta}]$, where $\hat{\theta}~=~\frac{(p_v+p_e)\mu}{p_d}$ since we know that the limiting value $\theta$ we are looking for belongs there. Indeed, by Lemma~\ref{lemma:linear_t} we know that it is at least $p_d$ and at most $\frac{(p_v+p_e)\mu -p_v + p_d}{p_d}$ and we work by $p_v > p_d$ to ensure that, on average, we add more vertices to the network than we deactivate. Recall that the function $F(a, b; c; z)$ is defined for $|z| < 1$ and $c \not\in \mathbb{Z}_{\leq 0}$.
Therefore, since $0 < \gamma < 1$ and $\rho(x)$ is positive on $[0, \hat\theta]$, both $F(1,2;\rho(x);\gamma)$ and $F(2,2;\rho(x);\gamma)$ are always defined, continuous and positive on $[0, \hat\theta]$. This implies that $R(x)$ is continuous on $[0,\hat\theta]$. Below we will justify that $R(x)$ has just one fixed point in the interval $[0,\hat{\theta}]$ and that a fixed-point iteration method will converge here. We start with recalling Banach Fixed Point Theorem.

\begin{theorem}[Banach Fixed Point Theorem] \label{thm:Banach}
	Let $(S,d)$ be a non-empty complete metric space with a contraction mapping $R:S \rightarrow S$. Then $R$ admits a unique fixed point $s^*$ in $S$ ($R(s^*)=s^*$). Furthermore, $s^*$ can be found as follows: start with an arbitrary element $s_0 \in S$ and define a sequence $\{s_n\}_{n \geq 1}$ by $s_n = R(s_{n-1})$ for $n \geq 1$. Then $\lim_{n \rightarrow \infty} s_n = s^*$.
\end{theorem}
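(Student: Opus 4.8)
The plan is to follow the classical three-step proof of the contraction mapping theorem: build a candidate fixed point as the limit of the iterates, verify that it really is fixed, and then establish uniqueness. Let $q \in [0,1)$ denote the contraction constant of $R$, so that $d(R(x),R(y)) \le q \cdot d(x,y)$ for all $x,y \in S$.

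First I would study the iteration $s_n = R(s_{n-1})$ started from an arbitrary $s_0 \in S$. Applying the contraction bound repeatedly gives $d(s_{n+1},s_n) \le q^n \cdot d(s_1,s_0)$. The crucial step is to upgrade this to the statement that $\{s_n\}_{n \ge 0}$ is a Cauchy sequence: for $m > n$ the triangle inequality combined with the geometric estimate yields $d(s_m,s_n) \le \sum_{j=n}^{m-1} d(s_{j+1},s_j) \le \frac{q^n}{1-q} \cdot d(s_1,s_0)$, and since $q<1$ this bound tends to $0$ as $n \to \infty$. Completeness of $(S,d)$ then supplies a limit $s^* \in S$, which is the only point in the argument where the completeness hypothesis is used.

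Next I would confirm that $s^*$ is a fixed point. Because every contraction is Lipschitz and hence continuous, one may interchange $R$ with the limit: $R(s^*) = R\bigl(\lim_{n} s_n\bigr) = \lim_{n} R(s_n) = \lim_{n} s_{n+1} = s^*$. This simultaneously establishes the convergence claim of the theorem, since the very sequence produced by the iteration is the one converging to $s^*$.

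Finally, uniqueness is immediate: if $s^*$ and $t^*$ are both fixed points, then $d(s^*,t^*) = d(R(s^*),R(t^*)) \le q \cdot d(s^*,t^*)$, so $(1-q) \cdot d(s^*,t^*) \le 0$, forcing $d(s^*,t^*)=0$ and hence $s^*=t^*$. I expect the Cauchy estimate in the second paragraph to be the only genuinely substantive step; the continuity argument and the uniqueness computation are short and follow mechanically from the contraction inequality.
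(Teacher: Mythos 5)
Your proof is correct and complete: it is the canonical Picard-iteration argument (geometric estimate $d(s_{n+1},s_n)\le q^n d(s_1,s_0)$, the telescoping Cauchy bound $d(s_m,s_n)\le \frac{q^n}{1-q}d(s_1,s_0)$, completeness to extract $s^*$, continuity of the contraction to pass $R$ through the limit, and the one-line uniqueness computation). The paper states this theorem as classical background without proof --- it is only invoked as a tool in Corollary \ref{cor:theta_est} --- so there is nothing in the paper to compare against; your argument is the standard one and has no gaps.
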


Thus we aim at showing that $R(x)$ is a contraction mapping on $[0,\hat{\theta}]$. From now on let $F_1(x) = F(1,2;\rho(x);\gamma)$ and $F_2(x) = F(2,2;\rho(x);\gamma)$ for $\gamma$ and $\rho(x)$ as in Theorem \ref{thm:R_fixed}.

The proofs of Lemmas \ref{lemma:r_formula}, \ref{lemma:R-increases}, and \ref{lemma:R_contr_map} can be found in the Appendix B.

 \begin{lemma} \label{lemma:r_formula}
 	The function $R(x) = \frac{F_2(x)}{F_1(x)}$ can be also expressed as $R(x) = x - \frac{p_v}{p_d} + \frac{1}{1 - \gamma} \frac{\rho(x) - 1}{F_1(x)}$,
	where $\gamma$ and $\rho(x)$ are as in Theorem \ref{thm:R_fixed}.
\end{lemma}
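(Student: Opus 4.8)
The plan is to reduce the stated identity to a single three-term (Gauss contiguous) relation linking $F_1(x)$, $F_2(x)$, and the constant series $F(0,2;\rho(x);\gamma)=1$, to prove that relation directly from the power-series definition, and then to translate the resulting constant coefficient into the form $x-p_v/p_d$ using the definitions of $\gamma$ and $\rho(x)$. Throughout, the hypotheses $0<\gamma<1$ and $\rho(x)>0$ on $[0,\hat\theta]$ (already recorded above) guarantee that the series converge and that $(\rho(x))_n\neq 0$, so every manipulation of coefficients is legitimate.

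First I would record two elementary simplifications of the Pochhammer symbols. Writing $\rho=\rho(x)$ and $g_n=(2)_n/(\rho)_n$, and using $(1)_n=n!$ together with $(2)_n/n!=n+1$, the coefficients of $\gamma^n$ in the two series become $[F_1]_n=g_n$ and $[F_2]_n=(n+1)g_n$. Second, I would establish the key recurrence
\[
(1-\gamma)F_2=(\rho-1)+(2-\rho+\gamma)F_1
\]
by matching the coefficients of $\gamma^n$ on both sides. For $n=0$ both sides equal $1$; for $n\ge 1$ the claim collapses to $(n+1)g_n-n\,g_{n-1}=(2-\rho)g_n+g_{n-1}$, that is, to $(\rho+n-1)g_n=(n+1)g_{n-1}$. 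This last identity is immediate from the definition of $g_n$, since $(2)_n/(2)_{n-1}=n+1$ and $(\rho)_n/(\rho)_{n-1}=\rho+n-1$. Proving this by hand (rather than quoting a contiguous-relation table) keeps the argument self-contained.

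Dividing the recurrence through by $(1-\gamma)F_1$ then gives $R(x)=F_2/F_1=\frac{2-\rho+\gamma}{1-\gamma}+\frac{1}{1-\gamma}\frac{\rho-1}{F_1}$, which already contains the desired last term. The only remaining task is the algebraic verification that the constant part equals $x-p_v/p_d$. Setting $Q=p_v(\mu-1)+p_e\mu+p_d$, so that $1-\gamma=p_d/Q$ and $\rho-2=(\mu(p_v+p_e)-p_d x)/Q$, one finds $\frac{2-\rho}{1-\gamma}=x-\mu(p_v+p_e)/p_d$ and $\frac{\gamma}{1-\gamma}=Q/p_d-1=\mu(p_v+p_e)/p_d-p_v/p_d$; summing these yields exactly $x-p_v/p_d$.

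I expect the main obstacle to be purely one of recognizing the right reduction: once the identity $[F_2]_n=(n+1)[F_1]_n$ is spotted, the recurrence verification is mechanical. The only place where sign or bookkeeping slips could occur is the final substitution expressing $\frac{2-\rho+\gamma}{1-\gamma}$ through $\mu,p_v,p_e,p_d$, but this is routine given the explicit forms of $\gamma$ and $\rho(x)$.
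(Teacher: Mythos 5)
Your proof is correct and takes essentially the same route as the paper's: both arguments reduce the identity to the single Gauss contiguous relation $(1-\gamma)F_2(x) = (\rho(x)-1) + (2-\rho(x)+\gamma)F_1(x)$ (using $F(0,2;\rho(x);\gamma)=1$) and then perform the same algebraic simplification of the constant term $\frac{2-\rho(x)+\gamma}{1-\gamma}$ to $x - \frac{p_v}{p_d}$. The only difference is that the paper quotes this relation from the Abramowitz--Stegun tables, whereas you verify it directly by matching coefficients via $(\rho(x)+n-1)g_n=(n+1)g_{n-1}$, which makes your argument self-contained but changes nothing substantive.
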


\begin{lemma}\label{lemma:R-increases}
	The function $R(x)$ strictly increases on $[0, \hat\theta]$.
\end{lemma}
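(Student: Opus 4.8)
The plan is to bypass the closed form from Lemma \ref{lemma:r_formula} and instead read $R$ directly off the hypergeometric series, recognizing it as the mean of a discrete distribution whose dependence on the parameter $\rho(x)$ can be controlled by a single covariance. First I would use $(1)_n = n!$ and $(2)_n = (n+1)!$ to rewrite
\[ F_1(x) = \sum_{n \ge 0} \frac{(n+1)!}{(\rho(x))_n}\gamma^n, \qquad F_2(x) = \sum_{n \ge 0} (n+1)\frac{(n+1)!}{(\rho(x))_n}\gamma^n. \]
Setting $w_n = (n+1)!\,\gamma^n/(\rho(x))_n$ and $p_n = w_n/F_1(x)$ — a genuine probability distribution, since $w_n > 0$ for all $n$ because $\rho(x) > 0$ on $[0,\hat\theta]$ and $\gamma \in (0,1)$ — and writing $\langle f \rangle = \sum_{n \ge 0} f(n)\,p_n$, one sees that $R = F_2/F_1 = \langle n+1 \rangle$. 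Thus $R$ is literally the expected value of $n+1$ under $p_n$.

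Since $\rho(x) = 2 + \frac{\mu(p_v+p_e)-p_d x}{p_v(\mu-1)+p_e\mu+p_d}$ is affine with $\rho'(x) = -\,p_d/(p_v(\mu-1)+p_e\mu+p_d) < 0$, it suffices to show that $R$ strictly decreases as a function of $\rho$ on the relevant (positive) range of $\rho$, and then combine with $\rho'(x) < 0$ via the chain rule. I would differentiate the series in $\rho$ — permissible termwise because $\gamma < 1$ and $(\rho)_n$ grows, giving absolute and locally uniform convergence on $\{\rho > 0\}$. Using $\frac{d}{d\rho}\log(\rho)_n = \psi_n$ with $\psi_n := \sum_{j=0}^{n-1}\frac{1}{\rho+j}$, we get $\frac{d}{d\rho}w_n = -\psi_n w_n$, and the quotient rule collapses neatly to
\[ \frac{dR}{d\rho} = -\big(\langle (n+1)\psi_n\rangle - \langle n+1\rangle\,\langle \psi_n\rangle\big), \]
that is, minus the covariance of $n+1$ and $\psi_n$ under the distribution $p_n$.

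To finish, I would note that both $n \mapsto n+1$ and $n \mapsto \psi_n$ are strictly increasing in $n$ (each successive term $1/(\rho+n) > 0$ for $\rho > 0$). By the standard correlation inequality for comonotone functions (Chebyshev's sum inequality), the covariance of two increasing functions under any distribution is nonnegative, and it is strictly positive here because $p_n$ is nondegenerate (supported on every $n \ge 0$) while both functions are nonconstant. Hence $\frac{dR}{d\rho} < 0$, and with $\rho'(x) < 0$ the chain rule yields $R'(x) > 0$ on all of $[0,\hat\theta]$, which is the claim.

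The main obstacle is simply spotting this covariance structure. A brute-force differentiation of $F(2,2;\rho;\gamma)/F(1,2;\rho;\gamma)$, or of the expression in Lemma \ref{lemma:r_formula}, forces one to handle derivatives of the Gaussian hypergeometric function with respect to its third parameter, which are awkward to sign directly. Once $R$ is interpreted as an expectation and the parameter derivative is identified with $-\psi_n$, the monotonicity reduces to the elementary comonotonicity inequality. The only remaining routine care is justifying termwise differentiation (immediate from $\gamma < 1$ and the growth of $(\rho)_n$) and verifying that the covariance inequality is strict rather than merely weak.
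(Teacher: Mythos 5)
Your proof is correct, and the delicate points are all handled: with $(1)_n=n!$ and $(2)_n=(n+1)!$ the series rewrites of $F_1$ and $F_2$ are right; $\rho(x)\geq 2$ on $[0,\hat\theta]$ (indeed $\rho(\hat\theta)=2$), so the weights $w_n=(n+1)!\,\gamma^n/(\rho(x))_n$ are positive; $\frac{d}{d\rho}\log(\rho)_n=\psi(\rho+n)-\psi(\rho)=\psi_n$ is the correct parameter derivative; termwise differentiation is legitimate since $w_n=\Theta\left(n^{2-\rho}\gamma^n\right)$ decays geometrically and $\psi_n=O(\log n)$; and the strict form of the correlation inequality holds because $p_n$ charges every $n\geq 0$ while $n+1$ and $\psi_n$ are both strictly increasing. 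Your route is, however, a genuinely different packaging of the same underlying mechanism as the paper's proof. The paper differentiates $F_1,F_2$ in $x$ via the formula $F'(a,b;c;z)=\sum_{n\geq 0}\frac{(a)_n(b)_n}{(c)_n}\frac{z^n}{n!}\left(\psi(c)-\psi(c+n)\right)$ (your $-\psi_n w_n$ in disguise), expands the numerator $F_2'(x)F_1(x)-F_2(x)F_1'(x)$ as a Cauchy product over the diagonal $k=n+m$, and determines its sign by pairing the terms $n=i$ and $n=k-i$ within each $k$, using monotonicity of the digamma function. That pairing is precisely the symmetrization identity $\mathrm{Cov}\left(f(N),g(N)\right)=\tfrac{1}{2}\,\E\left[\left(f(N)-f(N')\right)\left(g(N)-g(N')\right)\right]$ hidden inside your appeal to the comonotonicity (Chebyshev correlation) inequality. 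What your approach buys: interpreting $R=F_2/F_1=\langle n+1\rangle$ as the mean of an explicit discrete distribution collapses the whole computation to one covariance and a standard inequality, avoids the Cauchy-product bookkeeping entirely, and makes the strictness of the inequality transparent rather than something to be checked term by term. What the paper's version buys is self-containedness: it invokes no named correlation inequality and verifies the sign by direct combinatorial pairing, at the cost of a longer and more opaque calculation.
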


\begin{lemma} \label{lemma:R_contr_map}
	The function $R(x)$ is a contraction mapping on $[0, \hat\theta]$.
\end{lemma}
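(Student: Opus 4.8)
The plan is to verify directly the two defining requirements of the Banach setting on the complete metric space $([0,\hat\theta],|\cdot|)$: that $R$ maps $[0,\hat\theta]$ into itself, and that its Lipschitz constant there is strictly less than $1$. For the second requirement I would establish the uniform bound $0\le R'(x)<1$ and then invoke the mean value theorem.

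First, the self-map property. Since $R$ is continuous and, by Lemma \ref{lemma:R-increases}, strictly increasing, its image on $[0,\hat\theta]$ is exactly $[R(0),R(\hat\theta)]$, so it suffices to check $R(0)\ge 0$ and $R(\hat\theta)\le\hat\theta$. The lower bound is immediate because $R=F_2/F_1$ is positive on $[0,\hat\theta]$. For the upper bound I would evaluate $R$ at the right endpoint via the closed form of Lemma \ref{lemma:r_formula}. The crucial simplification is that $\rho(\hat\theta)=2$, since the $x$-dependent part of $\rho$ vanishes exactly at $x=\hat\theta=(p_v+p_e)\mu/p_d$; consequently $F_1(\hat\theta)=F(1,2;2;\gamma)=(1-\gamma)^{-1}$ by the elementary identity $F(1,2;2;z)=\sum_{n\ge 0}z^n=(1-z)^{-1}$ (using $(1)_n=n!$ and $(2)_n/(2)_n=1$). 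Substituting into $R(x)=x-\frac{p_v}{p_d}+\frac{1}{1-\gamma}\frac{\rho(x)-1}{F_1(x)}$ collapses the last term to $1$, giving $R(\hat\theta)=\hat\theta-\frac{p_v}{p_d}+1<\hat\theta$, the strict inequality following from the standing assumption $p_v>p_d$. Hence $R([0,\hat\theta])\subseteq[0,\hat\theta]$.

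Second, the contraction bound. Differentiating the formula of Lemma \ref{lemma:r_formula} and using $\rho'(x)=-(1-\gamma)$ (which is precisely $-p_d/(p_v(\mu-1)+p_e\mu+p_d)$), I obtain
\[
R'(x)=1-\frac{1}{F_1(x)}\left(1+\frac{(\rho(x)-1)\,F_1'(x)}{(1-\gamma)\,F_1(x)}\right).
\]
To conclude $R'(x)<1$ it is enough that the parenthesis is positive, and for this I would argue $F_1'(x)>0$. As $F_1$ depends on $x$ only through $c=\rho(x)$, we have $F_1'(x)=\rho'(x)\,\partial_c F(1,2;c;\gamma)$; the series $\sum_n\frac{(1)_n(2)_n}{(c)_n}\frac{\gamma^n}{n!}$ is termwise strictly decreasing in $c$ because $(c)_n$ increases with $c>0$, so $\partial_c F<0$, and combined with $\rho'(x)<0$ this gives $F_1'(x)>0$. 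Since $\rho(x)-1>0$, $\gamma<1$ and $F_1(x)>0$ on the interval, the parenthesis exceeds $1$, so $R'(x)<1$. The lower bound $R'(x)\ge 0$ is inherited from the strict monotonicity of Lemma \ref{lemma:R-increases}. Thus $0\le R'(x)<1$ on the compact interval, and since $R'$ is continuous there its supremum $q:=\max_{[0,\hat\theta]}R'<1$; the mean value theorem then yields $|R(x)-R(y)|\le q\,|x-y|$, so $R$ is a contraction.

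The main obstacle is the sign of $F_1'(x)$, because it rests on differentiating the hypergeometric series with respect to its \emph{parameter} $c$ rather than its argument. I would therefore need to justify term-by-term differentiation (uniform convergence of the series and of its $c$-derivative for fixed $\gamma\in(0,1)$ and $c$ ranging over the compact set $\rho([0,\hat\theta])\subset(0,\infty)$) before reading off the monotonicity in $c$. Everything else either is quoted from the preceding lemmas or reduces to the elementary endpoint evaluation $\rho(\hat\theta)=2$.
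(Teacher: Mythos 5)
Your proof is correct and takes essentially the same route as the paper's: differentiate the closed form of Lemma~\ref{lemma:r_formula}, use $\rho'(x)=-(1-\gamma)$ together with $F_1(x)>0$, $F_1'(x)>0$ and $\rho(x)>1$ to get $0\leq R'(x)<1$, pass to the maximum on the compact interval, and verify the self-map property at the endpoints. The only differences are to your credit: you make explicit the endpoint evaluation the paper dismisses as ``easy to check'' (namely $\rho(\hat\theta)=2$, $F(1,2;2;\gamma)=(1-\gamma)^{-1}$, hence $R(\hat\theta)=\hat\theta-\frac{p_v}{p_d}+1<\hat\theta$ by $p_v>p_d$), and you honestly flag the need to justify term-by-term differentiation of the hypergeometric series in the parameter $c$ --- a step the paper itself performs (in the proof of Lemma~\ref{lemma:R-increases}) without comment.
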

%

\begin{corollary} \label{cor:theta_est}
	Assume that the conditions $(1-4)$ from Section~\ref{sec:model} hold \textnormal{(}in particular, $\theta =  \lim_{t \rightarrow \infty} \frac{1}{t} \sum_{\tau=1}^{t} \E[\Theta_{\tau}]$\textnormal{)}. Then $\theta$ is a unique fixed point of $R(x)$ in $[0, \hat\theta]$, such that $\lim_{n\to\infty} \theta_n = \theta$, 
	where $\theta_{n+1} = R(\theta_n)$ and $\theta_0$ can take any value in $[0, \hat\theta]$.
\end{corollary}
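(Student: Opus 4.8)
The plan is to obtain the corollary as a direct application of the Banach Fixed Point Theorem (Theorem~\ref{thm:Banach}), assembled from the structural facts already proved about $R$. First I would fix the ambient space: take $S = [0,\hat\theta]$ with the usual Euclidean metric $d(x,y) = |x-y|$. Since $S$ is a closed, bounded, non-empty subinterval of $\mathbb{R}$, it is a complete metric space, so the ``space side'' of the hypotheses of Theorem~\ref{thm:Banach} is immediate.

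Next I would verify that $R$ is a contraction mapping $R:S\to S$ in the sense required by Theorem~\ref{thm:Banach}. This is exactly the content of Lemma~\ref{lemma:R_contr_map}, but I would make explicit the two ingredients it entails: that $R$ maps $[0,\hat\theta]$ into itself, and that there is a Lipschitz constant $q<1$ with $|R(x)-R(y)| \le q\,|x-y|$ on $[0,\hat\theta]$. The self-map property is where I expect the only real care to be needed; since $R$ strictly increases on $[0,\hat\theta]$ by Lemma~\ref{lemma:R-increases}, it reduces to the two endpoint inequalities $R(0)\ge 0$ and $R(\hat\theta)\le\hat\theta$, which together pin down $R([0,\hat\theta]) \subseteq [0,\hat\theta]$.

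With both hypotheses in hand, Theorem~\ref{thm:Banach} yields a unique fixed point $s^*\in[0,\hat\theta]$ and guarantees that the iteration $\theta_{n+1}=R(\theta_n)$ converges to $s^*$ for every starting value $\theta_0\in[0,\hat\theta]$. It then remains to identify $s^*$ with $\theta$. By Theorem~\ref{thm:R_fixed}, under Assumptions~(1--4) the limit $\theta$ is a fixed point of $R$; and by Lemma~\ref{lemma:linear_t} we have $p_d \le \theta \le \frac{(p_v+p_e)\mu - p_v + p_d}{p_d}$, which lies inside $[0,\hat\theta]$ precisely because $p_v > p_d$ forces $-p_v+p_d\le 0$, so $\theta \le \hat\theta = \frac{(p_v+p_e)\mu}{p_d}$. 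Hence $\theta\in[0,\hat\theta]$ is a fixed point of $R$ in that interval, and by the uniqueness clause of Theorem~\ref{thm:Banach} we get $\theta = s^*$. Consequently $\lim_{n\to\infty}\theta_n = \theta$, which is the claim.

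The main obstacle is the self-map verification $R([0,\hat\theta])\subseteq[0,\hat\theta]$, since that is what actually licenses the Banach machinery; once Lemmas~\ref{lemma:R-increases} and~\ref{lemma:R_contr_map} are granted and the localization $\theta\in[0,\hat\theta]$ is established from Lemma~\ref{lemma:linear_t}, the remainder is routine bookkeeping.
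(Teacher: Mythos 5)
Your proposal is correct and follows essentially the same route as the paper: the paper's proof of Corollary~\ref{cor:theta_est} is a one-line appeal to Theorem~\ref{thm:R_fixed} ($\theta$ is a fixed point), Lemma~\ref{lemma:R_contr_map} ($R$ is a contraction on the complete space $[0,\hat\theta]$, whose proof already contains the self-map check via $R(0)>0$ and $R(\hat\theta)<\hat\theta$), and the Banach Fixed Point Theorem, with the localization $\theta\in[0,\hat\theta]$ via Lemma~\ref{lemma:linear_t} and $p_v>p_d$ established in the surrounding text. You merely make explicit the bookkeeping (completeness of $[0,\hat\theta]$, the endpoint inequalities, and identifying $s^*$ with $\theta$ by uniqueness) that the paper leaves implicit.
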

\begin{proof}
	The proof follows directly from the fact that $\theta$ is a fixed point of $R(x)$ (Theorem \ref{thm:R_fixed}), the fact that $R(x)$ is a contraction mapping defined on a complete metric space (Lemma~\ref{lemma:R_contr_map}), and the Banach fixed-point theorem (Theorem \ref{thm:Banach}). 
\end{proof}

\begin{remark*}
	The speed of convergence of the fixed-point iteration method may be described by a Lipschitz constant for $R$, denoted here by $q$: $d(\theta,\theta_{n+1}) \leq \frac{q}{1-q} d(\theta_{n+1},\theta_n)$. If we conjecture that $R(x)$ is convex on $[0,\hat\theta]$ then we easily get ($R(x)$ is increasing) that the best Lipschitz constant for $R$ is $q = \sup_{x \in [0,\hat\theta]} R'(x) = R'(\hat\theta) = 1 + \frac{1-\gamma}{\gamma} \ln(1-\gamma)$. However, proving the convexity of $R(x)$ seems very demanding.
\end{remark*}

In the next section we present the results of applying the fixed-point iteration method to estimate $\theta$ for the exemplary random hypergraph following our model.

\section{Experimental results} \label{sec:experiments}

In order to verify the obtained results and the legitimacy of our assumptions, we ran numerous simulations of the model trying different sets of parameters. In this section we present the results of simulated $\tilde{H} = H(H_0, p_v = 0.3, p_e = 0.5, p_d = 0.2, Y_t)$, where the distribution of $Y_t$ was obtained experimentally from a real collaboration network $G$. 
$G$ was built upon data extracted from Scopus~\cite{scopus}, these were 239,414 computer science articles published between 1990 and 2018 by 258,145 different authors. Each author was treated as a node and every publication corresponded to a hyperedge between its co-authors. 

\begin{figure}[!ht]
		\centering
		\includegraphics[width=.67\linewidth]{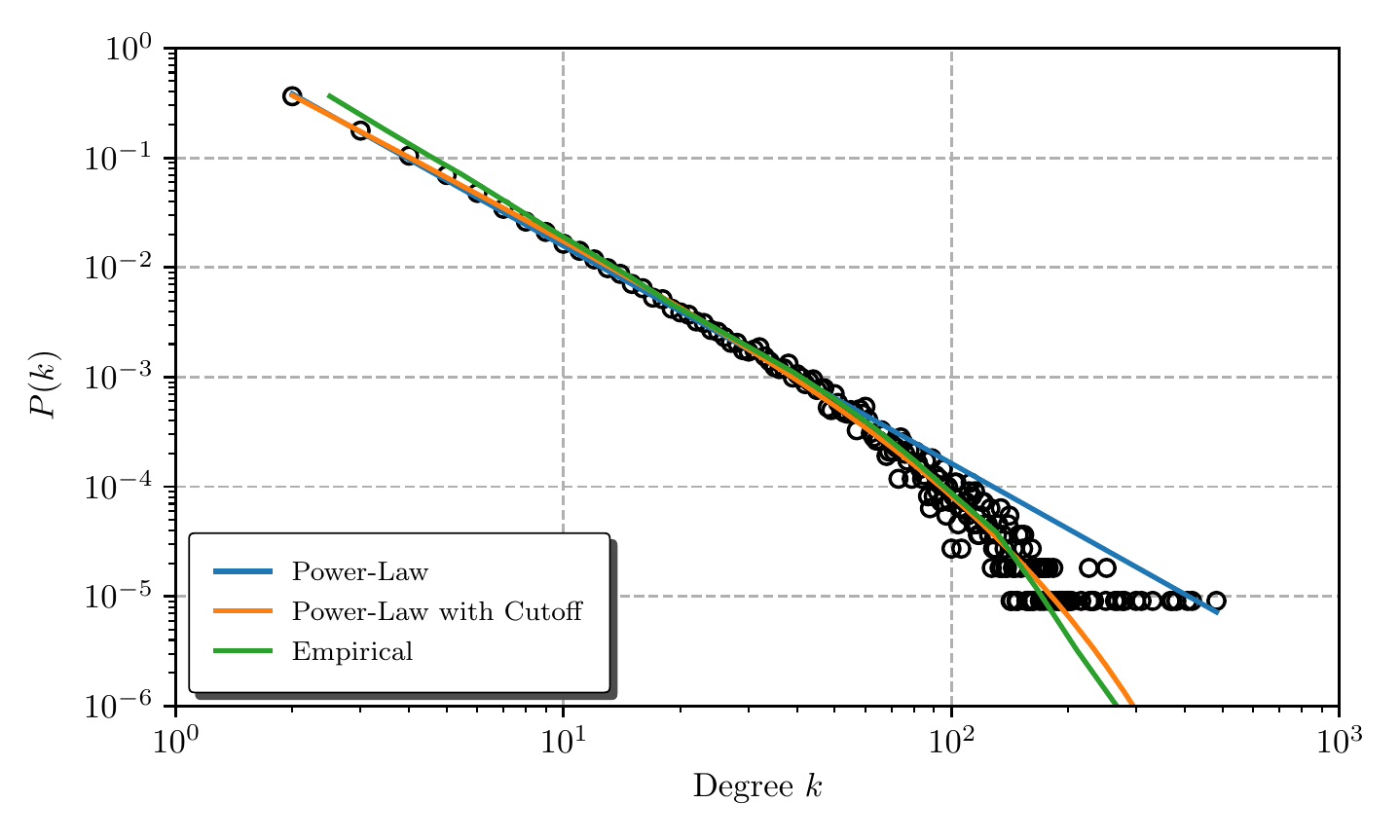}
		\caption{The degree distribution of the Scopus network $G$.} 
		\label{fig:degree_real}
\end{figure}
\begin{figure}[!ht]
		\centering 
		\includegraphics[width=.33\linewidth]{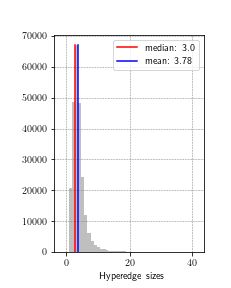}
		\caption{The distribution of sizes of hyperedges in the Scopus network $G$.}
		\vspace{-8pt}
		\label{fig:degree_simul}
\end{figure}

We used statistical tools from \cite{ClShNe2009} to fit and compare theoretical distributions with the real degree distribution of $G$. One finds the result in Figure \ref{fig:degree_real} which shows that a power-law with an exponential cutoff is a good fit here (this is just one of many examples of real-life networks that follow this distribution \cite{BrCl2019}). Figure \ref{fig:degree_simul} shows the distribution of sizes of hyperedges in $G$ - the one chosen for $Y_t$ in our experiment. 

	\begin{figure}[!ht]
		\vspace{-6pt}
		\centering
		\includegraphics[width=.67\linewidth]{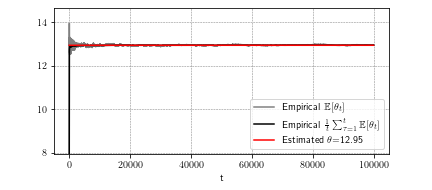}
		\caption{The empirical average deactivated degree (grey and black curves) as compared to the estimated $\theta$ (red line) in $\tilde{H}$.} 
		\label{fig:theta}
	\end{figure}
	\begin{figure}[!ht]
		\vspace{-11pt}
		\centering 
		\includegraphics[width=.67\linewidth]{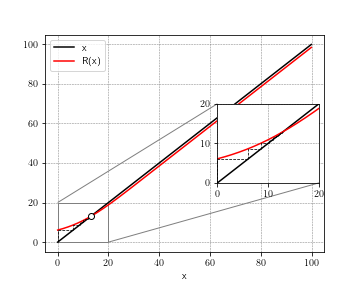}
		\caption{The visualization of the fixed-point iteration method applied to the model $\tilde{H}$, 
			starting from $\theta_0 = 0$.}
		\label{fig:theta_iter}
	\end{figure}

The evolution of the average degree of a vertex selected for deactivation in $\tilde{H}$ compared with the value of $\theta$ calculated using the fixed-point iteration method (Corollary \ref{cor:theta_est}) is presented in Figure \ref{fig:theta}. It shows the convergence of the empirical average degree of a deactivated vertex to the estimated value of $\theta$ which supports both, our Assumption (4) as well as the method for evaluating $\theta$ (see Figure \ref{fig:theta_iter} for its visualization).

	\begin{figure}[!ht]
		\vspace{-11pt}
		\centering
		\includegraphics[width=.66\linewidth]{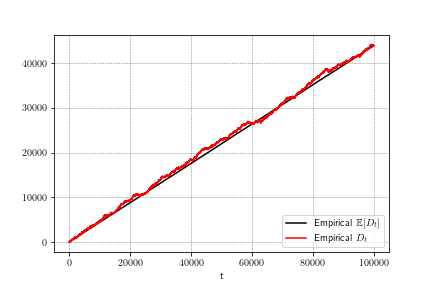}
		\caption{The empirical $\E[D_t]$ and the trajectory of $D_t$ for $\tilde{H}$.} 
		\label{fig:Dt_EDt}
	\end{figure}
	\begin{figure}[!ht]
		\vspace{-11pt}
		\centering 
		\includegraphics[width=.66\linewidth]{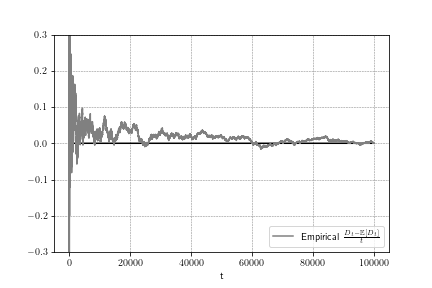}
		\caption{The concentration of $D_t$ for the~hypergraph $\tilde{H}$.}
		\label{fig:Dt_concentration}
	\end{figure}

Furthermore, we checked empirically the value of $\E[D_t]$ in the simulated $\tilde{H}$ (we ran 1000 simulations up to 100,000 steps). The empirical $\E[D_t]$ appeared to be linear with the slope $\hat\alpha = 0.438948$ (Figure \ref{fig:Dt_EDt}). We then calculated the slope of the theoretical $\E[D_t]$ using the fixed-point iteration method to compute $\theta$ and then plugging it into equation from Lemma~\ref{fact:Dt/t}. It yielded $\alpha = 0.438184$ which closely corresponds to $\hat\alpha$. 
Finally, the result seen in Figure~\ref{fig:Dt_concentration} supports our Assumption~(3) about the concentration of $D_t$.

For the results of simulations conducted with different sets of parameters (e.g. for $Y_t$ following some theoretical distribution, like Poisson), check the Appendix C. 


\section{Conclusions} \label{sec:conclude}

To the best of our knowledge, we have presented the first complex network model which allows for multiary relations and deactivation of elements\footnote{In our model the deactivated elements do not establish new connections any more but they do not disappear from the system and still contribute to the total sum of degrees. We find this setting very useful for real-life applications.}. Both those eventualities occur naturally in real-life systems. We thus believe that the model will find a wide range of applications in many research domains. We have also proved that its degree distribution follows a power-law with an exponential cutoff, which, according to the broad study of Broido and Clauset \cite{BrCl2019}, is the distribution most often observed in nature.

In further research we would like to investigate deeper some natural networks and observe how well our model reflects them. Maybe the need of generalizing the model will occur, e.g., by modifying the form of the attachment function. 
The other interesting direction of study is to make the attachment rule dependent not only on the degrees of vertices but also on their additional own characteristic (called \emph{fitness} in the literature \cite{fitness}). 


\bibliographystyle{plainurl}
\bibliography{hyper_bib}

\newpage
\appendix


\section{Preferential attachment hypergraph with vertex deactivation}

\begin{lemma}[Chernoff Bounds, \cite{MU_book}, Chapter 4.2] \label{lemma:Chernoff}
	Let $Z_1,Z_2,\ldots,Z_t$ be independent indicator random variables with $\Pr[Z_i=1]=p_i$ and $\Pr[Z_i=0]=1-p_i$. Let $Z = \sum_{i=1}^{t} Z_i$ and $m = \E[Z] = \sum_{i=1}^{t} p_i$. Then
	\[
	\Pr[|Z-m| \geqslant \delta m] \leq 2 e^{-m \delta^2/3}
	\]
	for all delta $\delta \in (0,1)$.
\end{lemma}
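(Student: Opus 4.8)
The plan is to use the exponential moment (Chernoff--Cram\'er) method, handling the upper and lower deviations separately and then combining them by a union bound. First I would record the single-variable moment generating function: for each indicator $Z_i$ and any $s \in \mathbb{R}$ one has $\E[e^{sZ_i}] = 1 + p_i(e^s - 1) \leq e^{p_i(e^s-1)}$, using the elementary bound $1 + x \leq e^x$. By independence of the $Z_i$ this multiplies up to $\E[e^{sZ}] = \prod_{i=1}^t \E[e^{sZ_i}] \leq e^{m(e^s-1)}$, where $m = \sum_{i=1}^t p_i$.

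For the upper tail I would apply Markov's inequality to $e^{sZ}$ with $s>0$, obtaining for any $\delta>0$
\[
\Pr[Z \geq (1+\delta)m] \leq \frac{\E[e^{sZ}]}{e^{s(1+\delta)m}} \leq \left( \frac{e^{e^s-1}}{e^{s(1+\delta)}} \right)^m.
\]
Minimizing the right-hand side over $s$ gives the optimizer $s = \ln(1+\delta)$, which yields the classical form $\Pr[Z \geq (1+\delta)m] \leq \left( e^\delta / (1+\delta)^{1+\delta}\right)^m$. The symmetric computation with $s<0$ (equivalently, Markov applied to $e^{-sZ}$) gives $\Pr[Z \leq (1-\delta)m] \leq \left( e^{-\delta}/(1-\delta)^{1-\delta}\right)^m$ for $\delta \in (0,1)$.

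The analytic heart of the argument, and the step I expect to be the main obstacle, is showing that both base quantities are dominated by $e^{-\delta^2/3}$ on $(0,1)$. Concretely I would prove $e^\delta/(1+\delta)^{1+\delta} \leq e^{-\delta^2/3}$, \emph{i.e.} $(1+\delta)\ln(1+\delta) \geq \delta + \delta^2/3$, together with the sharper bound $e^{-\delta}/(1-\delta)^{1-\delta} \leq e^{-\delta^2/2}$. Each reduces to a one-variable inequality that I would establish by Taylor expansion of the logarithm followed by a sign analysis of the derivative of the difference function on $(0,1)$; the first of these is the delicate one, since the naive second-order expansion gives only $\delta^2/2$ and one must retain enough of the series to land on the constant $1/3$. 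Raising to the $m$-th power then yields $\Pr[Z \geq (1+\delta)m] \leq e^{-m\delta^2/3}$ and $\Pr[Z \leq (1-\delta)m] \leq e^{-m\delta^2/2}$.

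Finally I would combine the two tails by a union bound. Since $\delta^2/2 \geq \delta^2/3$ we have $e^{-m\delta^2/2} \leq e^{-m\delta^2/3}$, so
\[
\Pr[|Z-m| \geq \delta m] \leq \Pr[Z \geq (1+\delta)m] + \Pr[Z \leq (1-\delta)m] \leq 2 e^{-m\delta^2/3},
\]
which is exactly the claimed bound. Since the statement attributes the result to \cite{MU_book}, one may alternatively simply invoke that reference; the sketch above records the self-contained route.
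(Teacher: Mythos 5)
Your proposal is correct: it is precisely the standard exponential-moment argument (MGF bound, optimization at $s=\ln(1+\delta)$, the one-variable inequalities $(1+\delta)\ln(1+\delta)\geq\delta+\delta^2/3$ and $e^{-\delta}/(1-\delta)^{1-\delta}\leq e^{-\delta^2/2}$, then a union bound) found in the cited reference \cite{MU_book}, Chapter 4.2. The paper itself gives no proof of this lemma but simply imports it from that source, so your sketch reconstructs exactly the intended argument.
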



\begin{corollary} \label{cor:Vt_concentr}
	Since $|V_t|$ follows a binomial distribution with parameters $t$ and $p_v$, setting $\delta = \sqrt{\frac{9 \ln{t}}{p_v t}}$ in Chernoff bounds (Lemma \ref{lemma:Chernoff}) 
	we get
	\[
	\Pr[||V_t|-\E[|V_t|]| \geqslant \sqrt{9 p_v t \ln{t}}] \leq 2/t^3.
	\]
\end{corollary}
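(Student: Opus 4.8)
The plan is to recognize the statement as a direct instantiation of the Chernoff bound from Lemma~\ref{lemma:Chernoff}, so the work reduces to identifying the relevant random variable and carrying out an elementary substitution. First I would let $Z_i$ be the indicator that the event performed at step $i$ is a vertex-addition (the first bullet in the definition of the model, occurring with probability $p_v$). By construction the type of event at each step is drawn according to the fixed probabilities $p_v, p_e, p_d$ independently of the current hypergraph, so the $Z_i$ are independent Bernoulli random variables with $\Pr[Z_i = 1] = p_v$. Since exactly one vertex is created at each vertex-step and none at the other steps, the number of vertices added during the first $t$ steps is $\sum_{i=1}^{t} Z_i$, which is $\mathrm{Bin}(t, p_v)$; this is the binomial structure asserted in the statement, and it gives $m := \E[|V_t|] = p_v t$.

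With $m = p_v t$ and the prescribed $\delta = \sqrt{9 \ln t / (p_v t)}$, the second step is pure arithmetic. I would check that $\delta m = p_v t \cdot \sqrt{9 \ln t /(p_v t)} = \sqrt{9 p_v t \ln t}$, which is precisely the deviation appearing in the claim, and that $m \delta^2 / 3 = (p_v t)(9 \ln t / (p_v t))/3 = 3 \ln t$. Feeding these into Lemma~\ref{lemma:Chernoff} yields $\Pr[||V_t| - \E[|V_t|]| \geq \sqrt{9 p_v t \ln t}] \leq 2 e^{-3 \ln t} = 2/t^3$, as desired. Before invoking the lemma I would note that it requires $\delta \in (0,1)$; since $\delta \to 0$ as $t \to \infty$, this holds for all sufficiently large $t$, which is all that the asymptotic (whp) use of this bound demands.

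There is no real obstacle here; the only points meriting care are two minor bookkeeping matters. First, the initial hypergraph $H_0$ contributes one fixed vertex, so strictly $|V_t| = 1 + \sum_{i=1}^{t} Z_i$; the additive constant shifts $\E[|V_t|]$ by $1$ but commutes with the centering and does not affect the tail bound, so I would either absorb it into the constant or simply remark that it only loosens the mean by an inconsequential additive $1$. Second, the independence of the $Z_i$ must be stated explicitly, since the Chernoff bound is applied to independent indicators; this is immediate from the model, where the event selection at each step is an independent trial with fixed probabilities regardless of the structure built so far.
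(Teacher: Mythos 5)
Your proposal is correct and matches the paper's argument, which likewise treats $|V_t|$ as $\mathrm{Bin}(t,p_v)$ and plugs $\delta = \sqrt{9\ln t/(p_v t)}$ into Lemma~\ref{lemma:Chernoff} to get $\delta m = \sqrt{9 p_v t \ln t}$ and $m\delta^2/3 = 3\ln t$, hence the bound $2/t^3$. Your two bookkeeping remarks --- the additive $+1$ from the initial vertex of $H_0$ and the requirement $\delta \in (0,1)$ (valid for all sufficiently large $t$, which suffices for the asymptotic use) --- are points the paper glosses over, and making them explicit only strengthens the argument.
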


Below we restate and prove Lemmas~\ref{lemma:Nkt_limit}, \ref{lemma:Y2/D2}, \ref{lemma:A/D}, and Theorem~\ref{thm:deg_dist}.

\begin{repeat_lemma_3} 
	If $\lim_{t \rightarrow \infty} \frac{\E[N_{k,t}]}{t} \sim c \cdot k^{-\beta} \gamma^k \left(\frac{1}{k} + \delta \right)$ for some positive constants $c, \beta, \gamma, \delta$ then
	\[
	\lim_{t \rightarrow \infty} \E\left[\frac{N_{k,t}}{|V_t|}\right] \sim \frac{c}{p_v} k^{-\beta} \gamma^k \left(\frac{1}{k} + \delta \right).
	\]
	\textnormal{(}Here ``$\sim$'' refers to the limit by $k \rightarrow \infty$.\textnormal{)}
\end{repeat_lemma_3}

\begin{proof}
	Let $B$ denote the event $[||V_t| - \E|V_t|| < \sqrt{9 p_vt\ln{t}}]$ and $B^C$ its complement. We have
	\[
		\E\left[\frac{N_{k,t}}{|V_t|}\right] = \E\left[\frac{N_{k,t}}{|V_t|}|B\right] \Pr[B] + \E\left[\frac{N_{k,t}}{|V_t|}|B^C\right] \Pr[B^C].
	\]
	Since $N_{k,t} \leq |V_t|$ and $\E|V_t| = p_v t$, by Corollary \ref{cor:Vt_concentr} we get
	\[
		\E\left[\frac{N_{k,t}}{|V_t|}\right] \leq \frac{\E[N_{k,t}]}{\E|V_t|-\sqrt{9 p_vt\ln{t}}} \cdot 1+ 1 \cdot \frac{2}{t^3} \sim \frac{\E[N_{k,t}]}{p_v t},
	\]
	and on the other hand
	\[
		\E\left[\frac{N_{k,t}}{|V_t|}\right] \geq \E\left[\frac{N_{k,t}}{|V_t|}|B\right] \Pr[B] \geq \frac{\E[N_{k,t}]}{\E|V_t|+\sqrt{9 p_vt\ln{t}}} \left(1-\frac{2}{t^3}\right) \sim \frac{\E[N_{k,t}]}{p_v t}.
	\]
\end{proof}

\begin{repeat_lemma_4}
	Assume that $\E[Y_t] = \mu$ for all $t>0$ and $\V[Y_t] = o(t)$. Then 
		\[
		\E\left[\frac{Y_t^2}{D_{t-1}^2}\right] = o\left(\frac{1}{t}\right).
		\]
\end{repeat_lemma_4}
\begin{proof}
	By the fact that $D_{t-1} \geq A_{t-1}$ and $Y_t$ is independent of $A_{t-1}$ we have
	\begin{equation} \label{eq:Y/D}
		\E\left[\frac{Y_t^2}{D_{t-1}^2}\right] \leq \E\left[\frac{1}{A_{t-1}^2}\right] \E[Y_t^2] = \E\left[\frac{1}{A_{t-1}^2}\right] (\V[Y_t] + \mu^2).
	\end{equation}
	Note that $A_{t-1}$ follows a binomial distribution with parameters $t-1$ and $p = (p_v-p_d)$ (recall that we assume  $p_v>p_d$ throughout the paper) thus setting $\delta = \sqrt{\frac{6 \ln{t}}{p(t-1)}}$ in Chernoff bounds (Lemma \ref{lemma:Chernoff}) we may write 
	\[
	\Pr[|A_{t-1} - (t-1) p| \geq \sqrt{6 p \cdot (t-1) \ln{t} }] \leq 2/t^2.
	\]
	Let $B$ denote the event $[|A_{t-1} - (t-1) p| < \sqrt{p \cdot t \ln{t^2}}]$ and $B^C$ its complement. Note that $A_{t-1} \geq 1$. We have
	\begin{equation} \label{eq:1/A2}
		\begin{split}
			\E\left[\frac{1}{A_{t-1}^2}\right] & = \E\left[\frac{1}{A_{t-1}^2}| B\right] \Pr[B] + \E\left[\frac{1}{A_{t-1}^2}| B^C \right] \Pr[B^C] \\
			& \leq \frac{1}{\left(p(t-1) - \sqrt{p \cdot t \ln{t^2}}\right)^2} \cdot 1 + 1 \cdot \frac{2}{t^2} = \Theta\left(\frac{1}{t^2}\right).
		\end{split}
	\end{equation}
	Thus by (\ref{eq:Y/D}) and (\ref{eq:1/A2}), since $\V[Y_t] = o(t)$, we obtain
	\[
	\E\left[\frac{Y_t^2}{D_{t-1}^2}\right] \leq \frac{1}{t^2} (\V[Y_t] + \mu^2) = o\left(\frac{1}{t}\right).
	\]
\end{proof}

\begin{repeat_lemma_5}
		Assume that $D_t = \E[D_t] + o(t)$ whp. Then for each $k \geq 1$
	\[
	\E\left[\frac{A_{k,t}}{D_t}\right] = \frac{\E[A_{k,t}]}{\E[D_t]} + o(1).
	\]
\end{repeat_lemma_5}
\begin{proof}
	Denote the event $[D_t = \E[D_t] + o(t)]$ by $B$ and its complement by $B^C$. Since $\Pr[B^C] = o(1)$, $\frac{A_{k,t}}{D_t} \leq 1$ and $\E[D_t] = \Omega(t)$ (note that $\E[D_t] \geq \E[A_t] = 1 +(p_v-p_d)t$ and we assume $p_v>p_d$), we have
	\[
	\begin{split}
		\E\left[\frac{A_{k,t}}{D_t}\right] & = \E\left[\frac{A_{k,t}}{D_t}|B\right]\Pr[B] + \E\left[\frac{A_{k,t}}{D_t}|B^C\right]\Pr[B^C]\\ 
		& = \frac{\E[A_{k,t}]}{\E[D_t]+o(t)} + o(1) = \frac{\E[A_{k,t}]}{\E[D_t]} + o(1).
	\end{split}
	\]
\end{proof}

\begin{repeat_theorem_6}
	Consider a hypergraph $H = H(H_0,p_e,p_v,Y)$ for any $t>0$. By Assumptions (1-4) the degree distribution of $H$ follows a power-law with an exponential cutoff, i.e.,
	\[
	\E\left[\frac{N_{k,t}}{|V_t|}\right] \sim c \cdot k^{-\beta} \gamma^k \left(\frac{1}{k} + \delta \right)
	\]
	\[
	\begin{split}
		for \quad \beta & = \frac{\mu(p_v+p_e)-p_d \theta}{p_v(\mu-1)+p_e \mu + p_d}, \quad \quad
		\gamma  = \frac{p_v(\mu-1)+p_e \mu}{p_v(\mu-1)+p_e\mu+p_d}, \\
		\delta & = \frac{p_d}{\mu(p_v+p_e)-p_d \theta}, \quad \quad \quad \quad c  = \frac{\beta \cdot \Gamma(1+\beta)}{\gamma},
	\end{split}
	\]
	where $\Gamma(x)$ stands for the gamma function \textnormal{(}$\Gamma(x) = \int_{0}^{\infty} t^{x-1}e^{-t} \,\mathrm{d}t$\textnormal{)}.
\end{repeat_theorem_6}

\begin{proof}
	We take a standard master equation approach that can be found e.g. in Chung and Lu book \cite{ChLu_book} about complex networks. 
	However, we apply it separately to the number of active vertices and the number of deactivated vertices.
	
	Recall that $N_{k,t}$ denotes the number of vertices of degree $k$ at time $t$. We need to show that $\lim_{t \rightarrow \infty} \E\left[\frac{N_{k,t}}{|V_t|}\right] \sim c \cdot k^{-\beta} \gamma^k \left(\frac{1}{k} + \delta \right)$ 
	for the proper constants $c, \beta, \gamma$ and $\delta$.
	However, by Lemma \ref{lemma:Nkt_limit} we know that it suffices to show that
	\[
	\lim_{t \rightarrow \infty} \frac{\E[N_{k,t}]}{t} \sim p_v \cdot c \cdot k^{-\beta} \gamma^k \left(\frac{1}{k} + \delta \right).
	\]
	
	Recall that $N_{k,t} = A_{k,t} + I_{k,t}$. First, let us evaluate $\lim_{t \rightarrow \infty} \frac{\E[A_{k,t}]}{t}$ using the mathematical induction on $k$. In this part we follow closely the lines of the proof that can be found in~\cite{ALP_hyper}. Consider the case $k=1$. Since $H_0$ consists of a single hyperedge of cardinality 1 over
	a single vertex, we have $A_{1,0} = 1$. To formulate a master equation, let us make the following observation for $t \geq 1$. An active vertex remains in $\mathcal{A}_{1,t}$ if it had degree 1 at step $t - 1$ and was neither selected to a hyperedge, nor deactivated. Recall that a vertex from $\mathcal{A}_{t-1}$ is chosen at step $t$ in a single trial to the new hyperedge with probability $1/D_{t-1}$ thus the chance that it won't be selected to the hyperedge of cardinality $y$ equals $(1-1/D_{t-1})^y$.
	Also, in each step, with probability $p_v$, a single new active vertex of degree $1$ is added to the hypergraph.
	Let $\mathcal{F}_t$ denote a $\sigma$-algebra associated with the probability space at step $t$. 
	For $t \geq 1$ we have
	\begin{equation} \label{equation:E[A_(1,t)]-recurrence_2}
		\begin{split}
			\E[A_{1,t} | \mathcal{F}_{t-1}] 
			& = p_v A_{1,t-1} \left(1 - \frac{1}{D_{t-1}}\right)^{Y_t - 1}
			+ p_e A_{1,t-1} \left(1 - \frac{1}{D_{t-1}}\right)^{Y_t}     \\
			& \quad 
			+ p_d A_{1,t-1} \left(1 - \frac{1}{D_{t-1}}\right)
			+ p_v. 
		\end{split}
	\end{equation}
	After taking the expectation on both sides of (\ref{equation:E[A_(1,t)]-recurrence_2}) we derive upper and lower bounds on $\E[A_{1,t}]$. By Bernoulli's inequality ($(1+x)^n \geq 1+nx$ for $n \in \mathbb{N}$ and $x\geq-1$), Lemma \ref{lemma:A/D} (thus by Assumption (3) necessary for it) and the independence of $Y_t$ from $A_{1,t-1}$ and $D_{t-1}$ we obtain
	\begin{equation} \label{equation:E[A_(1,t)]-lower-bound_2}
		\begin{split} 
			\E[A_{1,t}]
			&\geq p_v \E\left[A_{1,t-1} \left({1 - \frac{Y_t - 1}{D_{t-1}}}\right)\right]
			+ p_e \E\left[A_{1,t-1} \left({1 - \frac{Y_t}{D_{t-1}}}\right)\right] \\
			&\quad
			+ p_d \E\left[{A_{1,t-1} \left({1 - \frac{1}{D_{t-1}}}\right)}\right]
			+ p_v \\
			&= p_v \E[{A_{1,t-1}}] \left({1 - \frac{\E[{Y_t}] - 1}{\E[{D_{t-1}}]}}\right)
			+ p_e \E[{A_{1,t-1}}] \left({1 - \frac{\E[Y_t]    }{\E[{D_{t-1}}]}}\right) \\
			&\quad
			+ p_d \E[{A_{1,t-1}}] \left({1 - \frac{1}{\E[{D_{t-1}}]}}\right)
			+ p_v - o(1) \\
			&= \E[A_{1,t-1}] \left({
				1 - \frac{p_v (\mu - 1) + p_e \mu + p_d}{\E[{D_{t-1}}]}}\right) + p_v - o(1). 
		\end{split}
	\end{equation}
	
	On the other hand, since $(1 - x)^n \leq 1/(1 + nx)$ for $x\in[0, 1]$ and $n\in\mathbb{N}$, and $A_{1,t-1} \leq t$, by Lemma \ref{lemma:Y2/D2} (thus by Assumptions (1) and (2) necessary for it) and Lemma \ref{lemma:A/D} (thus by Assumption (3)) we have
	\begin{equation} \label{equation:E[A_(1,t)]-upper-bound_2}
		\begin{split}
			\E[{A_{1,t}}]
			&\leq p_v \E\left[{\frac{A_{1,t-1}}{1 + (Y_t - 1)/D_{t-1}}}\right]
			+ p_e \E\left[{\frac{A_{1,t-1}}{1 +  Y_t     /D_{t-1}}}\right] \\
			&\quad
			+ p_d\E\left[{A_{1,t-1} \left({1 - \frac{1}{D_{t-1}}}\right)}\right]
			+ p_v \\
			&=   p_v \E\left[{A_{1,t-1} \left({1 - \frac{Y_t - 1}{D_{t-1} + Y_t - 1}}\right)}\right]
			+ p_e \E\left[{A_{1,t-1} \left({1 - \frac{Y_t    }{D_{t-1} + Y_t}}\right)}\right] \\
			&\quad
			+ p_d \E\left[{A_{1,t-1} \left({1 - \frac{1}{D_{t-1}}}\right)}\right]
			+ p_v \\
			&\leq p_v \E\left[{A_{1,t-1} \left({1 - \frac{Y_t - 1}{D_{t-1}} + \frac{(Y_t - 1)^2}{D_{t-1}^2}}\right)}\right]
			+ p_e \E\left[{A_{1,t-1} \left({1 - \frac{Y_t    }{D_{t-1}} + \frac{ Y_t     ^2}{D_{t-1}^2}}\right)}\right] \\
			&\quad
			+ p_d \E\left[{A_{1,t-1} \left({1 - \frac{1}{D_{t-1}}}\right)}\right]
			+ p_v \\
			&= p_v \E[{A_{1,t-1}}] \left({1 - \frac{\E[{Y_t}] - 1}{\E[{D_{t-1}}]}}\right)
			+ p_e \E[{A_{1,t-1}}] \left({1 - \frac{\E[{Y_t}]    }{\E[{D_{t-1}}]}}\right) \\
			&\quad
			+ p_d \E[{A_{1,t-1}}] \left({1 - \frac{1}{\E[{D_{t-1}}]}}\right)
			+ p_v 
			+ p_v \E\left[{O(t) \frac{(Y_t - 1)^2}{D_{t-1}^2}}\right]
			+ p_e \E\left[{O(t) \frac{Y_t^2}{D_{t-1}^2}}\right] + o(1)\\
			&= \E[{A_{1,t-1}}] \left({
				1 - \frac{p_v (\mu - 1) + p_e \mu + p_d}{\E[{D_{t-1}}]}
			}\right) + p_v + o(1). 
		\end{split}
	\end{equation}
	From (\ref{equation:E[A_(1,t)]-lower-bound_2}) and (\ref{equation:E[A_(1,t)]-upper-bound_2}) we get
	\[
	\E[{A_{1,t}}] = \E[{A_{1,t-1}}] \left({
		1 - \frac{p_v (\mu - 1) + p_e \mu + p_d}{\E[{D_{t-1}}]}
	}\right) + p_v + o(1). 
	\]
	Now, we apply Lemma \ref{lemma:rec_seq} to the above equation choosing
	\[
	a_t  = \E[{A_{1,t}}], \quad
	b_t  =  \frac{p_v (\mu - 1) + p_e \mu + p_d}{\E[{D_{t-1}}]/t}, \quad
	c_t  =  p_v + o(1).
	\]
	We have $\lim_{t\to\infty} c_t = p_v$ and, by Fact \ref{fact:Dt/t} (thus by Assumptions (1) and (4) implying it), $\lim_{t\to\infty} b_t = \frac{p_v (\mu-1) + p_e \mu + p_d}
	{\mu (p_v + p_e) - p_d \theta} =: 1/\beta$ thus 
	\[
	\lim_{t\to\infty} \frac{a_t}{t} = \lim_{t\to\infty} \frac{\E[{A_{1,t}}]}{t} = \frac{p_v}{1 + 1/\beta} =: \bar{A}_1.
	\]
	Now, we assume that the limit $\lim_{t\to\infty} \frac{\E[{A_{k-1,t}}]}{t}$ exists and equals $\bar{A}_{k-1}$ and we will show by induction on $k$ that the analogous limit for $\E[{A_{k,t}}]$ exists. Let us again formulate a master equation, this time for $k>1$. We have $A_{k,0}=0$ and for $t \geq 1$ an active vertex appears in $A_{k,t}$ if it was active at step $t-1$, had degree $k-l$ and was chosen exactly $l$ times to a hyperedge, or it had degree $k$ and was not selected for deactivation. Let $B({l,n,p}) = \binom{n}{l} p^l (1-p)^{n-l}$. We have
	\[
	\begin{split}
		\E[{A_{k,t} | \mathcal{F}_{t-1}} ]
		&= p_v \smashoperator{\sum_{l=0}^{\min\{Y_t-1, k-1\}}} 
		A_{k-l,t-1} B\left({l, Y_t - 1, \frac{k-l}{D_{t-1}}}\right)
		+ p_e \smashoperator{\sum_{l=0}^{\min\{Y_t, k-1\}}} 
		A_{k-l,t-1} B\left({l, Y_t , \frac{k-l}{D_{t-1}}}\right) \\
		&\quad
		+ p_d A_{k,t-1} \left({1 - \frac{k}{D_{t-1}}}\right).
	\end{split}
	\]
	Taking the expectation on both sides we get
	\[
	\E[{A_{k,t}}] = 
	\E[\psi] + p_v \E[{\varphi(Y_t - 1)}] + p_e \E[{\varphi(Y_t)}], \quad \textnormal{where}
	\]
	\[
	\begin{split}
		\psi 
		&= p_v \sum_{l=0}^1
		A_{k-l,t-1} B\left({l, Y_t - 1, \frac{k-l}{D_{t-1}}}\right)
		+ p_e \sum_{l=0}^1 
		A_{k-l,t-1} B\left({l, Y_t, \frac{k-l}{D_{t-1}}} \right) \\
		&\quad + p_d A_{k,t-1} \left({1 - \frac{k}{D_{t-1}}}\right) \\
		&= A_{k,t-1} \left({
			p_v \left({1 - \frac{k}{D_{t-1}}}\right)^{Y_t-1} + 
			p_e \left({1 - \frac{k}{D_{t-1}}}\right)^{Y_t  } +
			p_d \left({1 - \frac{k}{D_{t-1}}}\right)
		}\right) \\
		&\quad
		+ A_{k-1,t-1} \frac{k-1}{D_{t-1}} \left({
			p_v (Y_t - 1) \left({1 - \frac{k-1}{D_{t-1}}}\right)^{Y_t-2} + 
			p_e  Y_t     \left( {1 - \frac{k-1}{D_{t-1}}}\right)^{Y_t-1}
		}\right)
	\end{split}
	\]
	and $\quad \varphi(n) = \smashoperator{\sum_{l=2}^{\min\{n, k-1\}}} A_{k-l,t-1} B\left({l, n, \frac{k-l}{D_{t-1}}}\right)$.
	
	\noindent
	We will show that only the term $\E[\psi]$ is significant and that the terms $\E[{\varphi(Y_t - 1)}]$ and $\E[{\varphi(Y_t)}]$ converge to 0 as $t\to\infty$. We have
	\[
	\begin{split}
		\varphi(Y_t) 
		&\leq \sum_{l=2}^{k-1} A_{k-l,t-1} 
		\binom{Y_t}{l} \left({\frac{k - l}{D_{t-1}}}\right)^l \left({1 - \frac{k-l}{D_{t-1}}}\right)^{Y_t - l} \\
		&\leq O(t) \sum_{l=2}^{k-1} 
		\binom{Y_t}{l} \left({\frac{k - l}{D_{t-1}}}\right)^l \left({1 - \frac{k-l}{D_{t-1}}}\right)^{Y_t - l} \\
		&\leq O(t) \sum_{l=2}^{k-1} 
		Y_t^l \left({\frac{k}{D_{t-1}}}\right)^l\left( {1 - \frac{1}{D_{t-1}}}\right)^{Y_t - k + 1} \\
		&\leq O(t)
		\frac{Y_t^2 k^2}{D_{t-1}^2} e^{-Y_t/D_{t-1}} e^{k-1} \sum_{l=2}^{k-1} \left({\frac{Y_t k}{D_{t-1}}}\right)^{l-2} \\
		&= O(t)
		\frac{Y_t^2}{D_{t-1}^2} e^{-Y_t/D_{t-1}} \sum_{l=2}^{k-1} \left({\frac{Y_t k}{D_{t-1}}}\right)^{l-2}.
	\end{split}
	\]
	Then, if $Y_t \leq D_{t-1}$, we have 
	\[
	\varphi(Y_t) \leq 
	O(t) \frac{Y_t^2}{D_{t-1}^2} k^{k-2} =
	O(t) \frac{Y_t^2}{D_{t-1}^2}.
	\]
	Otherwise, 
	\[
	\begin{split}
		\varphi(Y_t) 
		&\leq O(t) \frac{Y_t^2}{D_{t-1}^2} e^{-Y_t/D_{t-1}} 
		\frac{(Y_t k / D_{t-1})^{k-2} - 1}{(Y_t k / D_{t-1}) - 1} \\ & \leq O(t) \frac{Y_t^2}{D_{t-1}^2} e^{-Y_t/D_{t-1}} 
		\frac{(Y_t   / D_{t-1})^{k-2}}{k - 1}  k^{k-2} \\
		&\leq O(t) \frac{Y_t^2}{D_{t-1}^2} e^{-(k-2)}
		\frac{(k - 2)^{k-2}}{k - 1}  k^{k-2} = O(t) \frac{Y_t^2}{D_{t-1}^2},
	\end{split}
	\]
	where the last inequality follows from the fact that $e^{-x} x^{\alpha}$ is maximized at $x = \alpha$. Hence by Lemma \ref{lemma:Y2/D2} (thus by Assumptions (1) and (2)) in both above cases we get $\E[\varphi(Y_t)] = o(1)$ and, similarly, $\E[\varphi(Y_t-1)] = o(1)$. Now, we derive the bounds for $\E[\psi]$ analogous to the ones derived for $\E[A_{1,t}]$.
	\begin{equation} \label{equation:psi-lower-bound}
		\begin{split}
			\E[{\psi}]
			&= \E\left[{
				A_{k,t-1} \left({
					p_v \left({1 - \frac{k}{D_{t-1}}}\right)^{Y_t-1} + 
					p_e \left({1 - \frac{k}{D_{t-1}}}\right)^{Y_t  } +
					p_d \left({1 - \frac{k}{D_{t-1}}}\right)
				}\right)
			}\right] \\
			& \quad + \E\left[{
				A_{k-1,t-1} \frac{k-1}{D_{t-1}} \left({
					p_v (Y_t - 1) \left({1 - \frac{k-1}{D_{t-1}}}\right)^{Y_t-2} + 
					p_e  Y_t     \left( {1 - \frac{k-1}{D_{t-1}}}\right)^{Y_t-1}
				}\right)
			}\right] \\
			&\geq \E\left[{
				A_{k,t-1} \left({
					p_v \left({1 - \frac{(Y_t-1) k}{D_{t-1}}}\right) + 
					p_e \left({1 - \frac{ Y_t    k}{D_{t-1}}}\right) +
					p_d \left({1 - \frac{k}{D_{t-1}}}\right)
				}\right)
			}\right] \\
			&\quad +\E\left[ {
				A_{k-1,t-1} \frac{k-1}{D_{t-1}} \left({
					1 - \frac{(k-1)(Y_t-1)}{D_{t-1}}
				}\right) \left({
					p_v (Y_t - 1) + p_e Y_t
				}\right)
			}\right] \\
			&= \E[{A_{k,t-1}}] \left({
				1 - \frac{k \left({p_v (\mu-1) + p_e \mu + p_d}\right)}{\E[{D_{t-1}}]}
			}\right) \\
			&\quad
			+ \E[{A_{k-1,t-1}}] \frac{(k-1) \left({p_v (\mu-1) + p_e \mu}\right)}{\E[{D_{t-1}}]} + o(1).
		\end{split}
	\end{equation}
	On the other hand,
	\begin{equation} \label{equation:psi-upper-bound}
		\begin{split}
			\E[{\psi} ]
			&\leq \E\left[{A_{k,t-1}} \left({
				1 - \frac{k (p_v (Y_t-1) + p_e Y_t + p_d)}{D_{t-1}}
			}\right) \right] + \E\left[{O(t) \frac{Y_t^2}{D_{t-1}^2}}\right] \\
			&\quad
			+ \E\left[{A_{k-1,t-1} \frac{k-1}{D_{t-1}} (p_v (Y_t - 1) + p_e Y_t)}\right] \\
			&= \E[{A_{k,t-1}}] \left({
				1 - \frac{k \left({p_v (\mu-1) + p_e \mu + p_d}\right)}{\E[{D_{t-1}}]}
			} \right)\\
			&\quad
			+ \E[{A_{k-1,t-1}}] \frac{(k-1) \left({p_v (\mu-1) + p_e \mu}\right)}{\E[{D_{t-1}}]} + o(1).
		\end{split}
	\end{equation}
	By (\ref{equation:psi-lower-bound}) and (\ref{equation:psi-upper-bound}) we get
	\begin{equation}
		\begin{split}
			\E[{A_{k,t}}]
			&= \E[{A_{k,t-1}}] \left({
				1 - \frac{k \left({p_v (\mu-1) + p_e \mu + p_d}\right)}{\E[{D_{t-1}}]}
			}\right) \\
			&\quad
			+ \E[{A_{k-1,t-1}}] \frac{(k-1) \left({p_v (\mu-1) + p_e \mu}\right)}{\E[{D_{t-1}}]} + o(1).
		\end{split}
	\end{equation}
	Recall that by the induction assumption $\lim_{t\to\infty} \E[{A_{k-1,t}}]/t = \bar{A}_{k-1}$. Now, we apply again Lemma \ref{lemma:rec_seq} to the above equation choosing
	\[
	\begin{split}
		a_t &= \E[{A_{k,t}}], \quad
		b_t = \frac{
			k (p_v (\mu-1) + p_e \mu + p_d)
		}{
			\E[{D_{t-1}}]/t
		}, \\
		c_t &= \frac{\E[{A_{k-1,t-1}}]}{t}
		\frac{
			(k - 1) (p_v (\mu-1) + p_e \mu )
		}{
			\E[{D_{t-1}}]/t
		}
		+ o(1).
	\end{split}
	\]
	By Fact \ref{fact:Dt/t} (thus by Assumptions (1) and (4)) we have 
	\[
	\lim_{t\to\infty} b_t = k / \beta \quad \textnormal{and} \quad 
	\lim_{t\to\infty} c_t = \bar{A}_{k-1}
	\frac{
		(k - 1) (p_v (\mu-1) + p_e \mu)
	}{
		\mu (p_v + p_e) - p_d \theta
	}
	\]
	thus
	\begin{equation}
		\lim_{t\to\infty} \frac{a_t}{t}
		= \lim_{t\to\infty} \frac{\E[{A_{k,t}}]}{t}
		= \bar{A}_k
		= \bar{A}_{k-1} \frac{
			(k - 1) \gamma
		}{
			k + \beta
		},
	\end{equation}
	where $\gamma = \frac{p_v (\mu - 1) + p_e \mu}{p_v (\mu - 1) + p_e \mu + p_d}$. 
	Thus we got
	\[
	\begin{split}
		\bar{A}_1 &=  p_v \beta \frac{1}{(1 + \beta)}, \quad 
		\bar{A}_2 = p_v \beta \frac{\gamma}{(1 + \beta)(2 + \beta)}, \quad \ldots,  \\
		\bar{A}_k & = p_v \beta \frac{\gamma^{k-1} (k-1)!}{(1 + \beta)(2 + \beta) \dots (k + \beta)}.
	\end{split}
	\]
	Since $\lim_{k \rightarrow \infty} \frac{\Gamma(k)k^{\alpha}}{\Gamma(k+\alpha)} = 1$ for constant $\alpha \in \mathbb{R}$ we have
	\begin{equation} \label{eq:Ak_1}
		\lim_{t\to\infty} \frac{\E[{A_{k,t}}]}{t} = \bar{A}_k =
		\frac{p_v \beta}{\gamma} 
		\frac{\gamma^k \Gamma(1 + \beta) \Gamma(k)}
		{\Gamma(k + \beta + 1)}
		\sim
		p_v \cdot c \cdot \gamma^k k^{-(\beta + 1)}
	\end{equation}
	with $c = \frac{\beta \cdot \Gamma(1 + \beta)}{\gamma}$.
	
	Now, let us evaluate $\lim_{t\to\infty} \frac{\E[{I_{k,t}}]}{t}$. We have $I_{k,0} = 0$ for all $k \geq 1$. For $t \geq 1$ the expected number of inactive vertices of degree $k \geq 1$ at step $t$, given $\mathcal{F}_{t-1}$, can be expressed as
	\[
	\E[{I_{k,t} | \mathcal{F}_{t-1}}] =
	I_{k,t-1} + p_d A_{k,t-1} \frac{k}{D_{t-1}},
	\]
	since inactive vertices of degree $k$ remain in $I_{k,t}$ forever and a vertex of degree $k$ becomes inactive if it was selected in step $t-1$ for deactivation.
	Taking the expectation on both sides, by Lemma \ref{lemma:A/D} (thus by Assumption (3)), we obtain
	\[
	\E[{I_{k,t}}] = 
	\E[{I_{k,t-1}} ]
	+ p_d \E[{A_{k,t-1}}] \frac{k}{\E[{D_{t-1}}]} + o(1).
	\]
	Then, by Fact \ref{fact:Dt/t} (thus by Assumptions (1) and (4)),
	\[
	\begin{split}
		\lim_{t\to\infty} \left(\E[I_{k,t}] - \E[{I_{k,t-1}}]\right)
		&= \lim_{t\to\infty} p_d k
		\frac{\E[{A_{k,t-1}}]}{t}
		\frac{t}{\E[{D_{t-1}}]} + o(1) \\
		&= \bar{A}_k \frac{p_d k}{(p_v + p_e) \mu - p_d \theta} = \bar{A}_k k \delta,
	\end{split}
	\]
	where $\delta = \frac{p_d}{(p_v + p_e) \mu - p_d \theta}$.
	And, by Stolz--Ces\`aro theorem (Theorem \ref{thm:S-C}), we obtain
	\begin{equation} \label{eq:Ik_2}
		\bar{I}_k := 
		\lim_{t\to\infty} \frac{\E[{I_{k,t}}]}{t} = 
		\lim_{t\to\infty} (\E[{I_{k,t}}] - \E[{I_{k,t-1}}]) = 
		\bar{A}_k k \delta.
	\end{equation}
	Finally, by (\ref{eq:Ak}) and (\ref{eq:Ik_2})
	\[
	\lim_{t\to\infty} \frac{\E[{N_{k,t}}]}{t} = \lim_{t\to\infty} \frac{\E[{A_{k,t}}]+\E[{I_{k,t}}]}{t}  = 
	\bar{A}_k + \bar{I}_k = 
	\bar{A}_k (1 + k\delta) \sim
	p_v \cdot c \cdot k^{-\beta} \gamma^k  \left({\frac{1}{k} + \delta}\right).
	\]
\end{proof}

\newpage
\section{Estimating the limiting value $\theta$}

\begin{repeat_lemma_8}
	Assume that $\Pr[D_t \neq \E[D_t] + o(t)] = o(1/t)$. Then
	\[
		\E\left[ \frac{\sum_{k \geq 1} k^2 A_{k,t}}{D_t}\right] = \frac{\E[\sum_{k \geq 1} k^2 A_{k,t}]}{\E[D_t]} + o(1).
	\]
\end{repeat_lemma_8}
\begin{proof}
	Denote the event $[D_t = \E[D_t] + o(t)]$ by $B$ and its complement by $B^C$. Let us work assuming that $k \leq t$ (indeed, in our model it is very unlikely that a vertex achieves degree greater than $t$ after $t$ steps). Since
	\[
	D_t = \sum_{k \geq 1} k A_{k,t}, \quad \frac{\sum_{k \geq 1} k^2 A_{k,t}}{D_t} \leq \frac{t \sum_{k \geq 1} k A_{k,t}}{D_t} = t, \quad \Pr[B^C] = o(1/t),
	\]
	and $\E[D_t] = \Omega(t)$ (note that $\E[D_t] \geq \E[A_t] = 1 +(p_v-p_d)t$ and we assume $p_v>p_d$), we have
	\[
	\begin{split}
		\E\left[\frac{\sum_{k \geq 1} k^2 A_{k,t}}{D_t}\right] & = \E\left[\frac{\sum_{k \geq 1} k^2 A_{k,t}}{D_t}|B\right]\Pr[B] + \E\left[\frac{\sum_{k \geq 1} k^2 A_{k,t}}{D_t}|B^C\right]\Pr[B^C]\\ 
		& \leq \frac{\E[\sum_{k \geq 1} k^2 A_{k,t}]}{\E[D_t]+o(t)} + t \cdot o(1/t) = \frac{\E[\sum_{k \geq 1} k^2 A_{k,t}]}{\E[D_t]} + o(1).
	\end{split}
	\]
\end{proof}

\begin{repeat_lemma_11}
	The function $R(x) = \frac{F(2,2;\rho(x);\gamma)}{F(1,2;\rho(x);\gamma)}$ can be also expressed as
	\[
	R(x)
	= x - \frac{p_v}{p_d} + 
	\frac{1}{1 - \gamma}
	\frac{\rho(x) - 1}{F(1,2;\rho(x);\gamma)},
	\]
	where $\gamma$ and $\rho(x)$ are as in Theorem \ref{thm:R_fixed}.
\end{repeat_lemma_11}

\begin{proof}
	We will use the Gauss' contiguous relations (consult~\cite{AbSt1972}). Let $a,b,c,z \in \mathbb{C}$ with $|z|<1$ and $c \notin \mathbb{Z}_{\leq 0}$. Let $F(z) = F(a, b; c; z)$, $F(a+,z) = F(a+1, b; c; z)$ and $F(a-,z) = F(a-1, b; c; z)$. Then
	\[ 
	a (F(a+,z) - F(z)) =
	\frac
	{(c-a) F(a-,z) + (a-c+bz) F(z)}
	{1-z}
	\]
	which is equivalent to
	\begin{equation}\label{equation:F(a+)/F}
		\frac{F(a+,z)}{F(z)} =
		\frac{2a - c + (b-a)z}
		{a(1-z)} +
		\frac{(c-a) F(a-,z)}
		{a(1-z) F(z)}.
	\end{equation}
	Since $F(0, b; c; z) = 1$, plugging $a = 1$, $b = 2$, $c = \rho(x)$ and $z = \gamma$ into (\ref{equation:F(a+)/F}), we get the result. 
\end{proof}

\begin{repeat_lemma_12}
	$R(x)$ strictly increases on $[0, \hat\theta]$.
\end{repeat_lemma_12}

\begin{proof}
	First, note that the derivative of the hypergeometric function with respect to $c$ is
	\[ 
	F'(a, b; c; z) =
	\sum_{n=0}^{\infty}
	\frac{(a)_n (b)_n}{(c)_n} 
	\frac{z^n}{n!}
	\big(\psi(c) - \psi(c + n)\big),
	\]
	where $\psi(x)$ denotes the digamma function ($\psi(x) = \frac{\mathrm{d}}{\mathrm{d}x} \ln \Gamma(x) = \frac{\Gamma'(x)}{\Gamma(x)}$).
	Since the digamma function increases on $(0, +\infty)$, we can see that $F'(a, b; c; z)$ is negative when parameters of the function are positive.
	
	Now, observe that since $\rho'(x) = \gamma - 1$, we have
	\[ 
	F_1'(x) = (\gamma - 1) F'(1, 2; \rho(x); \gamma)
	\quad\textnormal{and}\quad
	F_2'(x) = (\gamma - 1) F'(2, 2; \rho(x); \gamma),
	\]
	and thus, given that $\gamma < 1$, they are both positive.
	In order to determine the sign of
	\[ 
	R'(x) = 
	\frac{F_2'(x) F_1(x) - F_2(x) F_1'(x)}
	{F_1(x)^2},
	\]
	we only need to determine the sign of its numerator.
	By considering the Cauchy product of $F_2'(x)$ and $F_1(x)$, we obtain
	\[ 
	F_2'(x) F_1(x) =
	(\gamma - 1) \sum_{k=0}^{\infty} \gamma^k \sum_{n=0}^{k}
	(n + 1)
	\frac{(2)_n}
	{(\rho(x))_n}
	\frac{(2)_{k-n}}
	{(\rho(x))_{k-n}}
	\big(\psi(\rho(x)) - \psi(\rho(x) + n)\big).
	\]
	Similarly, for $F_2(x)$ and $F_1'(x)$, we have
	\[ 
	F_2(x) F_1'(x) = 
	(\gamma - 1) \sum_{k=0}^{\infty} \gamma^k \sum_{n=0}^{k}
	(k - n + 1)
	\frac{(2)_n}
	{(\rho(x))_n}
	\frac{(2)_{k-n}}
	{(\rho(x))_{k-n}}
	\big(\psi(\rho(x)) - \psi(\rho(x) + n)\big).
	\]
	Finally, we express the difference between these two expressions as
	\[ 
	\begin{split}
		&F_2'(x) F_1(x) - F_2(x) F_1'(x) = \\
		&\qquad\qquad
		(\gamma - 1) \sum_{k=0}^{\infty} \gamma^k \sum_{n=0}^{k}
		(2n - k)
		\frac{(2)_n}
		{(\rho(x))_n}
		\frac{(2)_{k-n}}
		{(\rho(x))_{k-n}}
		\big(\psi(\rho(x)) - \psi(\rho(x) + n)\big).
	\end{split}
	\]
	We now check the sign of the inner sum.
	Observe that the sum of two elements with indices $n=i$ and $n=k - i$ is
	\[ 
	(2i - k)
	\frac{(2)_i}
	{(\rho(\theta))_i}
	\frac{(2)_{k-i}}
	{(\rho(x))_{k-i}}
	\big(
	\psi(\rho(x) + k - i) - \psi(\rho(x) + i)
	\big).
	\]
	Since the digamma function increases on $(0, +\infty)$, the inner sum is negative, which, together with $\gamma<1$, implies $F_2'(x) F_1(x) - F_2(x) F_1'(x) > 0$. We conclude that $R'(x) > 0$.
\end{proof}

\begin{repeat_lemma_13}
	The function $R(x)$ is a contraction mapping on $[0, \hat\theta]$.
\end{repeat_lemma_13}

\begin{proof}
	Remind that a function $f : S \mapsto S$, defined on a metric space $(S, d)$, is called a contraction mapping, if there exists a constant $q \in [0, 1)$, such that for all $s_1, s_2 \in S$, we have $d(f(s_1), f(s_2)) \leq q d(s_1, s_2)$.
	If $f(s)$ is a differentiable function, such that $\sup\,|f'(s)| < 1$, then $f(s)$ is a contraction mapping with $q = \sup\,|f'(s)|$.
	
	Using the form of $R(x)$ presented in Lemma \ref{lemma:r_formula}, we obtain
	\[
	R'(x) = 
	1 + \frac{1}{1 - \gamma} \frac{ \rho'(x)      F_1 (x) - 
		(\rho (x) - 1) F_1'(x)}	{F_1(x)^2}.
	\]
	$F_1(x)$ is positive and increases, and $\rho(x) > 1$ and decreases on $[0, \hat\theta]$, which implies that the right term of the expression is negative.
	Since $R(x)$ also increases on $[0, \hat\theta]$ (Lemma \ref{lemma:R-increases}), we have that $|R'(\theta)| \in [0, 1)$ for any $\theta \in [0, \hat\theta]$.
	Therefore, by the extreme value theorem, we know that $|R'(\theta)|$ achieves some maximum value $q \in (0, 1)$.
	Then, since $[0, \hat\theta]$ is a complete metric space and $R([0, \hat\theta]) \subseteq [0, \hat\theta]$ (using the formula from Lemma \ref{lemma:r_formula} it is easy to check that $R(0)>0$ and $R(\hat\theta) < \hat\theta$), we conclude that $R(x)$ is a contraction mapping on $[0, \hat\theta]$.
\end{proof}

\newpage
\section{Further experimental results}
Below we present the results for simulated ${H}^* = H(H_0, p_v = 0.3, p_e = 0.5, p_d = 0.2, Y_t)$, where $Y_t$ follows a truncated Poisson distribution with mean $\lambda = 4$, i.e., for $k \in \{1,2,3,\ldots\}$
\[
	\Pr[Y_t = k] = \frac{\lambda^k}{(e^{\lambda}-1) k!}.
\] 

Figure \ref{fig:theta_poisson} shows the evolution of the average degree of a vertex selected for deactivation in ${H}^*$ compared with the value of $\theta$ calculated using the fixed-point iteration method (Corollary~\ref{cor:theta_est}). It shows the convergence of the empirical average degree of a deactivated vertex to the estimated value of $\theta$ which again supports both, our Assumption (4) as well as the method for evaluating~$\theta$ (Figure \ref{fig:theta_iter_poisson} presents its visualization).

	\begin{figure}[!ht]
	\vspace{-20pt}
	\centering
	\includegraphics[width=.67\linewidth]{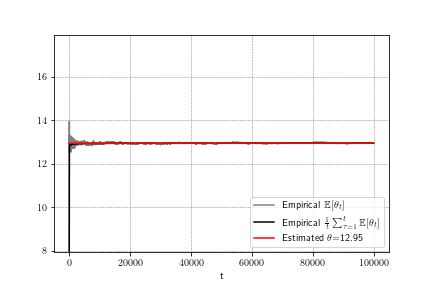}
	\caption{The empirical average deactivated degree (grey and black curves) as compared to the estimated $\theta$ (red line) in ${H}^*$.} 
	\label{fig:theta_poisson}
\end{figure}
\begin{figure}[!ht]
	\vspace{-40pt}
	\centering 
	\includegraphics[width=.67\linewidth]{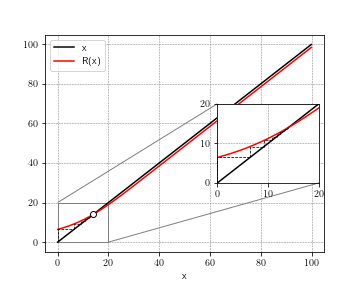}
	\caption{The visualization of the fixed-point iteration method applied to the model ${H}^*$, 
		starting from $\theta_0 = 0$.}
	\label{fig:theta_iter_poisson}
\end{figure}

Next, as for $\tilde{H}$ in Section \ref{sec:experiments}, we checked empirically the value of $\E[D_t]$ in ${H}^*$ (we ran again 1000 simulations up to 100000 steps). The empirical $\E[D_t]$ appeared to be linear, this time with the slope $\hat\alpha = 0.461354$ (Figure \ref{fig:Dt_EDt_poisson}). We then calculated the slope of the theoretical $\E[D_t]$ using the fixed-point iteration method to compute $\theta$ and then plugging it into equation from Lemma~\ref{fact:Dt/t}. It yielded $\alpha = 0.461397$ which closely corresponds to $\hat\alpha$.

\begin{figure}[!ht]
	\centering
	\includegraphics[width=.67\linewidth]{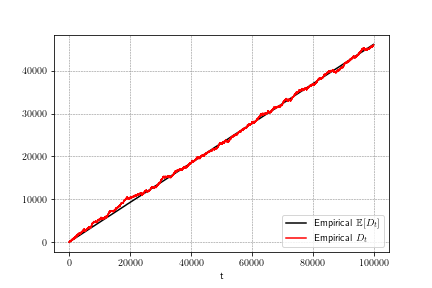}
	\caption{The empirical $\E[D_t]$ and the trajectory of $D_t$ for ${H}^*$.} 
	\label{fig:Dt_EDt_poisson}
\end{figure}
The result seen in Figure~\ref{fig:Dt_concentration_poisson} supports again Assumption~(3) about the concentration of $D_t$.

\begin{figure}[!ht]
	\centering 
	\includegraphics[width=.67\linewidth]{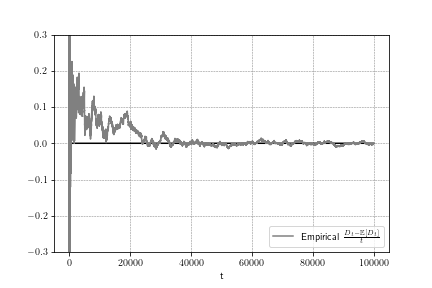}
	\caption{The concentration of $D_t$ for the~hypergraph ${H}^*$.}
	\label{fig:Dt_concentration_poisson}
\end{figure}


\end{document}